\definecolor{darkgreen}{rgb}{0.0,0,0.9}
\newtcolorbox{wbox}
{
	colback  = white,
}
\newcommand*{\Zplus}{\mathbb{Z_+}}
\newcommand*{\cR}{\mathcal{R}}
\newcommand*{\cA}{\mathcal{A}}
\newcommand*{\bE}{\mathbb{E}}
\newcommand*{\cO}{{{\rho}}}
\newtheorem{theorem}{Theorem}
\newtheorem{lemma}[theorem]{Lemma}
\newtheorem{corollary}[theorem]{Corollary}
\newtheorem{claim}[theorem]{Claim}
\theoremstyle{definition}
\newtheorem{definition}[theorem]{Definition}
\newtheorem{remark}[theorem]{Remark}
\newtheorem{alg}{Algorithm}
\newtheorem{example}[theorem]{Example}
\newtheorem{property}[theorem]{Property}
\newtheorem{ctheorem}[theorem]{Conditional Theorem}
\newenvironment{fminipage}%
{\begin{Sbox}\begin{minipage}}%
		{\end{minipage}\end{Sbox}\fbox{\TheSbox}}
\newcommand{\bid}{\mbox{\rm bid}}
\newcommand{\eb}{\mbox{\rm ebid}}
\newcommand{\ut}{{ut}}
\title{Towards a Practical, Budget-Oblivious Algorithm\\
for the Adwords Problem under Small Bids}
\author[1]{Vijay V.~Vazirani\footnote{Supported in part by NSF grants CCF-1815901 and CCF-2230414.}}
\affil[1]{University of California, Irvine}
\date{}
\begin{document}
	\maketitle
	
	\begin{abstract}	
Motivated by recent insights into the online bipartite matching problem (\textsc{OBM}), our goal was to extend  the optimal algorithm for it, namely \textsc{Ranking}, all the way to the special case of adwords problem, called \textsc{Small}, in which bids are small compared to budgets; the latter has been of considerable practical significance in ad auctions \cite{MSVV}. The attractive feature of our approach was that it would yield a {\em budget-oblivious algorithm}, i.e., the algorithm would not need to know budgets of advertisers and therefore could be used in autobidding platforms. 
	
We were successful in obtaining an optimal, budget-oblivious algorithm for \textsc{Single-Valued}, under which each advertiser can make bids of one value only. However, our next extension, to \textsc{Small}, failed because of a fundamental reason, namely failure of the {\em No-Surpassing Property}. Since the probabilistic ideas underlying our algorithm are quite substantial, we have stated them formally, after assuming the No-Surpassing Property, and we leave the open problem of removing this assumption. 


With the help of two undergrads, we conducted extensive experiments on our algorithm on randomly generated instances. Our findings are that the No-Surpassing Property fails less than $2\%$ of the time and that the performance of our algorithms for \textsc{Single-Valued} and \textsc{Small} are comparable to that of \cite{MSVV}. If further experiments confirm this, our algorithm may be useful as such in practice, especially because of its budget-obliviousness. 
\end{abstract}

\bigskip
\bigskip
\bigskip
\bigskip
\bigskip
\bigskip
\bigskip
\bigskip
\bigskip
\bigskip
\bigskip
\bigskip
\bigskip
\bigskip
\bigskip
\bigskip

\pagebreak

\section{Introduction}
\label{sec.intro}

The {\em adwords problem}, called \textsc{Adwords}\footnote{For formal statements of problems studied in this paper, see Section \ref{sec.prelim}} in this paper, involves matching keyword queries, as they arrive online, to advertisers; the latter have daily budget limits and they make bids for the queries. Its special case when bids are small compared to budgets, called \textsc{Small} in this paper, captures a key computational issue that arises in the context of ad auctions, for instance in Google's AdWords marketplace. An optimal algorithm for \textsc{Small}, achieving a competitive ratio of $\left(1 - {1 \over e}\right)$, was first given in \cite{MSVV}; for the impact of this result in the marketplace, see Section \ref{sec.practical}. {\em In this paper, we give a new budget-oblivious online algorithm for \textsc{Small}}. 

A {\em budget-oblivious online algorithm} does not know the daily budgets of advertisers; however, in a run, it knows when the budget of an advertiser is exhausted. Yet its revenue is compared to the optimal revenue generated by an offline algorithm with full knowledge of the budget. The  importance of a budget-oblivious  algorithm lies in its use in autobidding platforms \cite{Auto-bidding, Chap-Adwords}, which manage the ad campaigns of large advertisers; they  dynamically adjust bids and budgets over multiple search engines to improve performance. In Open Problem Number 20, Mehta \cite{Mehta2013online} asks for such an algorithm for \textsc{Small}. 

Recent insights on the online bipartite matching problem (\textsc{OBM}) encouraged us to seek such an algorithm. A simple optimal algorithm, called \textsc{Ranking}, achieving a competitive ratio of $\left(1 - {1 \over e}\right)$, was given in \cite{KVV} for \textsc{OBM}. However, the analysis of \textsc{Ranking} given in \cite{KVV} was difficult to comprehend. A sequence of papers has finally led to a simple and elegant analysis, see Section \ref{sec.related}. The simplicity of \textsc{Ranking} is particularly attractive; moreover, it has become the paradigm-setting algorithmic idea in the area of online and matching-based market design \cite{Book-Online}. 

Ideas underlying the new proof of \textsc{OBM} enabled us to generalize \textsc{Ranking} all the way to an algorithm for \textsc{Small}, while retaining the simplicity of the \textsc{Ranking}. As a result of this simplicity, our algorithm has better properties than \cite{MSVV}; in particular, it is budget-oblivious. A detailed discussion of its running time is given below. A budget-oblivious algorithm for \textsc{Small}, having a competitive ratio  of $0.522$\footnote{Note that the greedy algorithm, which is clearly budget-oblivious, achieves a competitive ratio  of $0.5$.} was recently obtained by Udwani \cite{Unknown-udwani2021adwords}, using the idea of an LP-free analysis, which involves writing appropriate linear inequalities to compare the online algorithm with the offline optimal algorithm.  

At the outset of this work, extending \textsc{Ranking} directly to \textsc{Small} seemed an uphill task. Therefore we attempted an intermediate problem first, namely \textsc{Single-Valued}, in which each advertiser can make bids of one value only, although the value may be different for different advertisers. We note that \cite{Aggarwal2011online} had already obtained an optimal online algorithm for \textsc{Single-Valued} by reducing it to the vertex weighted online matching problem, see Section \ref{sec.related} for details. As explained in Section \ref{sec.ideas}, in order to develop tools for attacking \textsc{Small}, we needed to solve \textsc{Single-Valued} {\em directly}, and not resort to this reduction.

Our algorithm for \textsc{Single-Valued} is optimal, and it is also budget-oblivious. Furthermore, our algorithm uses fewer random bits than the approach of \cite{Aggarwal2011online}; see Section \ref{sec.related} for a detailed comparison. We note that  in contemporary\footnote{Our paper was first posted on arXiv on July 22, 2021 \cite{Va.OBM}.} and independent work, Albers and Schubert \cite{Albers2021optimal} obtained an identical result for \textsc{Single-Valued}; their technique is different and involves formulating a configuration LP and conducting a primal-dual analysis. Our technical ideas are described in Section \ref{sec.ideas}. 

Our analysis of \textsc{Single-Valued} involved new ideas from two domains, namely probability theory and combinatorics, with the former playing a dominant role and the latter yielding a  proof of a condition called the {\em No-Surpassing Property}, see Property \ref{prop.SV-no-surpass}. Equipped with these ideas, we next attempted an extension from  \textsc{Ranking} to \textsc{Small}. Although we met with success in extending the more difficult, probabilistic part, of the argument, we found a counter-example to the combinatorial part, showing that the No-Surpassing Property does not hold for \textsc{Small}. 

In order to make the no-surpassing property fail, we had to intricately ``doctored up'' the instance  of \textsc{Small}. This raised the question of experimentally determining how often this property fails in typical instances and how it affects the performance of our algorithm; for the latter, we compared it to \cite{MSVV}. As can be seen in the four Tables in Section \ref{sec.expt}, the property fails rarely, for less than $2\%$ of the edges $(i, j)$, and the performance of our algorithms for \textsc{Single-Valued} and \textsc{Small} are comparable with that of the MSVV Algorithm. For this reason, and because of its budget-obliviousness, the algorithm may be useful as such in practice. Clearly, it will be good to obtain further experimental confirm on varied  types of instances.

Since the ideas underlying our algorithm for \textsc{Small}, and the probabilistic part of its proof,   are quite substantial, we have stated them formally, after assuming the No-Surpassing Property, see Section \ref{sec.Adwords}. Under this assumption, we prove a competitive ratio of $\left(1 - {1 \over e}\right)$ for our algorithm. The problem of obtaining a tight unconditional competitive ratio of our algorithm is an important one and has received much attention over the last two years, ever since the appearance of this paper on arXiv. Critical insights into this open problem are provided by the following results: first, Udwani \cite{Unknown-udwani2021adwords} gave an example to show that the unconditional competitive ratio of our algorithm is strictly less than $(1 - 1/e)$. Next, Liang et al. \cite{Oblivious-0.624} showed that the unconditional competitive ratio is less than $0.624$; in contrast, $(1 - 1/e) \approx 0.632$.

\bigskip

\begin{remark}
	\label{rem.objective}
	The objective of all problems studied in this paper is to maximize the total revenue accrued by the online algorithm. In economics, such a solution is referred to as {\em efficient}, since the amount bid by an advertiser is indicative of how useful the query is to it, and hence to the economy.
\end{remark}

 \subsection{Related Works}
 \label{sec.related}

\textsc{OBM} occupies a central place not only in online algorithms but also in matching-based market design, see details in Section \ref{sec.practical}. The analysis of \textsc{Ranking} given in \cite{KVV} was considered ``difficult'' and it also had an  error. Over the years, several researchers contributed valuable ideas to simplifying its proof. The first simplifications, in \cite{Goel2008online, Claire2008line}, got the ball rolling, setting the stage for the substantial simplification given in \cite{Devanur2013randomized}, using a randomized primal-dual approach. \cite{Devanur2013randomized} introduced the idea of splitting the contribution of each matched edge into primal and dual contributions and lower-bounding each part separately. Their method for defining prices $p_j$ of goods, using randomization, was used by subsequent papers, including this one\footnote{For a succinct proof of optimality of the underlying function, $e^{x-1}$, see Section 2.1.1 in \cite{Chap-Online}.}. 

Interestingly enough, the next simplification involved removing the scaffolding of LP-duality and casting the proof in purely probabilistic terms\footnote{Even though there is no overt use of LP-duality in the proof of \cite{Eden2018economic}, it is unclear if this proof could have been obtained directly, without going the LP-duality-route.}, using notions from economics to split the contribution of each matched edge into the contributions of the buyer and the seller. This elegant analysis was given by \cite{Eden2018economic}. We note that when we move to generalizations of \textsc{OBM}, even this economic interpretation needs to be dropped, see Remark \ref{rem.names}. Building on these works, and incorporating a further simplification relating to the No-Surpassing Property for \textsc{OBM}, a ``textbook quality'' proof was recently given in \cite{Va.MFCS}.   

An important generalization of \textsc{OBM} is online $b$-matching. This problem is a special case of \textsc{Adwords} in which the budget of each advertiser is $\$b$ and the bids are $0/1$. \cite{Pruhs} gave a simple optimal online algorithm, called \textsc{BALANCE}, for this problem. \textsc{BALANCE} awards the next query to the interested bidder who has been matched least number of times so far. \cite{Pruhs} showed that as $b$ tends to infinity, the competitive ratio of \textsc{BALANCE} tends to $\left(1 - {1 \over e}\right)$.

Observe that $b$-matching is a special case of \textsc{Small}, if $b$ is large. Indeed, MSVV Algorithm was obtained by extending \textsc{BALANCE}\footnote{It is worth recalling that \cite{MSVV}  had first attempted extending \textsc{OBM} to \textsc{Small}; however, in the absence of new insights into \textsc{OBM}, this did not go very far.} as follows: \cite{MSVV} first gave a simpler proof of the competitive ratio of \textsc{BALANCE} using the notion of a {\em factor-revealing LP} \cite{Factor2003}. Then they gave the notion of a {\em tradeoff-revealing LP}, which yielded an algorithm achieving a competitive ratio of $\left(1 - {1 \over e}\right)$. \cite{MSVV} also proved that this is optimal for $b$-matching, and hence \textsc{Small}, by proving that no randomized algorithm can achieve a better ratio for online $b$-matching; previously, \cite{Pruhs} had shown a similar result for deterministic algorithms.

The MSVV Algorithm is simple and operates as follows. The effective bid of each bidder $j$ for a query is its bid multiplied by $(1 - e^{L_j/B_j})$, where $B_j$ and $L_j$ are the total budget and the leftover budget of bidder $j$, respectively; the query is matched to the bidder whose effective bid is highest. As a result, the MSVV Algorithm needs to know the total budget of each bidder. Following \cite{MSVV}, a second optimal online algorithm for \textsc{Small} was given in \cite{Buchbinder2007online}, using a primal-dual approach.  

Another relevant generalization of \textsc{OBM} is online vertex weighted matching, in which the offline vertices have weights and the objective is to maximize the weight of the matched vertices. \cite{Aggarwal2011online} extended \textsc{Ranking} to obtain an optimal online algorithm for this problem. Clearly, \textsc{Single-Valued} is intermediate between \textsc{Adwords} and online vertex weighted matching. \cite{Aggarwal2011online} gave an optimal online algorithm for \textsc{Single-Valued} by reducing it to online vertex weighted matching. This involved creating $k_j$ copies of each advertiser $j$. As a result, their algorithm needs to use $\sum_{j \in A} {k_j}$ random numbers, where $A$ is the set of advertisers. On the other hand, our algorithm, and that of \cite{Albers2021optimal}, needs to use only $|A|$ numbers.

\textsc{Adwords} is a notoriously difficult problem, partly due to  its inherent structural difficulties, which are described in Section \ref{sec.diff-GENERAL}. For \textsc{Adwords}, the greedy algorithm, which matches each query to the highest bidder, achieves a competitive ratio of $1/2$. Until recently, that was the best possible. In \cite{Huang2020adwords} a marginally improved algorithm, with a ratio of $0.5016$, was given. It is important to point out that this 60-page paper was a tour-de-force, drawing on a diverse collection of ideas --- a testament to the difficulty of this problem.

 In the decade following the conference version (FOCS 2005) of \cite{MSVV}, search engine companies generously invested in research on models derived from \textsc{OBM} and adwords. The reason  was two-fold: the substantial impact of \cite{MSVV} and the emergence of a rich collection of digital ad tools. It will be impossible to do justice to this substantial body of work, involving both algorithmic and game-theoretic ideas; for a start, see the surveys \cite{Mehta2013online, Chap-Online}.

\subsection{Significance and Practical Impact}
 \label{sec.practical}

Google's AdWords marketplace generates multi-billion dollar revenues annually and the current annual worldwide spending on digital advertising is almost half a trillion dollars. These revenues of Google and other Internet services companies enable them to offer crucial services, such as search, email, videos, news, apps, maps etc. for free -- services that have virtually transformed our lives. 

We note that \textsc{Small} is the most relevant case of adwords for the search ads marketplace e.g., see \cite{Chap-Adwords}. A remarkable feature of Google, and other search engines, is the speed with which they are able to show search results, often in milliseconds. In order to show ads at  the same speed, together with search results, the solution for \textsc{Small} needed to be minimalistic in its use of computing power, memory and communication.  

The MSVV Algorithm satisfied these criteria and therefore had substantial impact in this marketplace. Furthermore, the idea underlying their algorithm was extracted into a simple heuristic, called  {\em bid scaling}, which uses even less computation and is widely used by search engine companies today. As mentioned above, our Conditional Algorithm for \textsc{Small} is even more elementary and is budget-oblivious.

It will be useful to view the AdWords marketplace in the context of a bigger revolution, namely the advent of the Internet and mobile computing, and the consequent resurgence of the area of matching-based market design. The birth of this area goes back to the seminal 1962 paper of Gale and Shapley on stable matching \cite{GaleS}. Over the decades, this area became known for its highly successful applications, having economic as well as sociological impact. These included matching medical interns to hospitals, students to schools in large cities, and kidney exchange. 
 
  The resurgence led to a host of highly innovative and impactful applications. Besides the AdWords marketplace, which matches queries to advertisers, these include Uber, matching drivers to riders; Upwork, matching employers to workers; and Tinder, matching people to each other, see \cite{Simons,Book-Online} for more details. 
  
  A successful launch of such markets calls for economic and game-theoretic insights, together with algorithmic ideas. The Gale-Shapley Deferred Acceptance Algorithm and its follow-up works provided the algorithmic backbone for the ``first life'' of matching-based market design. The algorithm \textsc{Ranking} has become the paradigm-setting algorithmic idea in the ``second life'' of this area \cite{Book-Online}. Interestingly enough, this result was obtained in the pre-Internet days, over thirty years ago.

 \subsection{Technical Ideas}
 \label{sec.ideas}

Our extension from \textsc{Ranking} to \textsc{Small} needs to go via  \textsc{Adwords}. It turns out that \textsc{Adwords} suffers from an inherent structural difficulty, see Section \ref{sec.diff-GENERAL}. We temporarily finesse this difficulty by using the idea of ``fake'' money. The expected revenue of our online algorithm for \textsc{Adwords} is at least $(1 - 1/e)$ fraction of the optimal offline revenue; however, this total revenue consists of real as well as fake money. We provide an upper-bound on the fake money in the worst case, and this suffices to show that, asymptotically, the fake money used by  \textsc{Small}, is negligible. Determining the true competitive ratio of our algorithm for \textsc{Adwords} is left as an interesting and important open problem, see Section \ref{sec.discussion}. 

 As described in Section \ref{sec.related}, \textsc{Single-Valued} can be reduced to online vertex weighted matching, by making $k_j$ copies of each advertiser $j$; however, this reduction does not work for \textsc{Adwords}. The reason is that the manner in which budget $B_j$ of bidder $j$ gets partitioned into bids is not predictable in the latter problem; it depends on the queries, their order of arrival and the randomization executed in a run of the algorithm. Therefore, in order to build techniques to attack \textsc{Adwords}, we will first need to solve \textsc{Single-Valued} {\em without} reducing it to online vertex weighted matching. 

This is done in Algorithm \ref{alg.main}. Almost all of our new ideas, on the probabilistic front, needed to attack \textsc{Small} were obtained in the process analyzing this algorithm. First, since vertex $j$ is not split into $k_j$ copies, we cannot talk about the contribution of edges anymore. Even worse, we don't have individual vertices for keeping track of the revenue accrued from each match, as per the scheme of \cite{Eden2018economic}. Our algorithm gets around this difficulty by accumulating revenue in the same ``account'' each time bidder $j$ gets matched. The corresponding random variable, $r_j$, is called the {\em total revenue} of bidder $j$, for want of a better name, see Remark \ref{rem.names}. Lower bounding $\bE[r_j]$ is much more tricky than lower bounding the revenue of a good in \textsc{OBM}, since it involves ``teasing apart'' the $k_j$ accumulations made into this account; this is done in Lemma \ref{lem.star}. 

 The key fact needed in the analysis of \textsc{Ranking} is that for each edge $e = (i, j)$ in the underlying graph, its expected contribution to the matching produced is at least $(1 - 1/e)$. For this purpose, the random variable, $u_e$, called {\em threshold}, is defined in \cite{Va.MFCS}.
 
 For analyzing \textsc{Single-Valued}, a replacement is needed for this lemma. For this purpose, we give the notion of a {\em $j$-star}, denoted $X_j$, which consists of bidder $j$ together with edges to $k_j$ of its neighbors in $G$, see Definition \ref{def.j-star}. The contribution of $j$-star $X_j$, is denoted by $\bE[X_j]$, which is also defined in Definition \ref{def.j-star}. Finally, using the lower bound on $\bE[r_j]$, Lemma \ref{lem.star} gives a lower $\bE[X_j]$ for {\em every} $j$-star, $X_j$. This lemma crucially uses a new random variable, called {\em truncated threshold}, see Definition \ref{def.threshold-Single}. 

Next, we explain the reason for truncation in the definition of this random variable. Consider bidder $j$ and a query $i_l$ that is desired by $j$. Observe that in run $\cR_j$\footnote{Run $\cR_j$ is defined in Definition \ref{def.R_j}.}, query $i_l$ can get a bid as large as $B \cdot (1 - {1 \over e})$, where $B = \max_{k \in A} \{b_k\}$, whereas the largest bid that $j$ can make to $i_l$ is $b_j  \cdot (1 - {1 \over e})$; in general, $b_j$ may be smaller than $B$. Now, $i_l$ contributes revenue to $r_j$ only if $i_l$ is matched to $j$ in run $\cR$, an event which will definitely not happen if $u_{e_l} > b_j  \cdot (1 - {1 \over e})$. Therefore, whenever $u_{e_l} \in [b_j \cdot (1 - {1 \over e}), \ B \cdot (1 - {1 \over e})]$, the contribution to $r_j$ is zero. By truncating $u_{e_l}$ to $b_j \cdot (1 - {1 \over e})$, we have effectively changed the probability density function of $u_{e_l}$ so that the probability of the event $u_{e_l} \in [b_j \cdot (1 - {1 \over e}), B \cdot (1 - {1 \over e})]$ is now concentrated at the event $u_{e_l} = b_j \cdot (1 - {1 \over e})$. From the viewpoint of lower bounding the revenue accrued in $r_j$, the two probability density functions are equivalent since the revenue accrued is zero under both these events. On the other hand, the truncated random variable enables us to apply the law of total expectation, in the proof of Lemma \ref{lem.star}, in the same way as it was done in \cite{Va.MFCS}, without introducing more difficulties.   	  

Finally, in order to establish the no-surpassing property for \textsc{Single-Valued}, we give the necessary combinatorial facts in Lemma \ref{lem.S-subset-j} and Corollary \ref{cor.S-subset-j}. These facts are enhanced versions of the facts needed to prove the no-surpassing property for \textsc{Ranking} in \cite{Va.MFCS}.

\section{Preliminaries}
\label{sec.prelim}

{\bf Online Bipartite Matching} (\textsc{OBM}):
Let $B$ be a set of $n$ buyers and $S$ a set of $n$ goods. A bipartite graph $G = (B, S, E)$ is specified on vertex sets $B$ and $S$, and edge set $E$, where for $i \in B, \ j \in S$, $(i, j) \in E$ if and only if buyer $i$ {\em likes} good $j$. $G$ is assumed to have a perfect matching and therefore each buyer can be given a unique good she likes. Graph $G$ is revealed in the following manner. The $n$ goods are known up-front. On the other hand, the buyers arrive one at a time, and when buyer $i$ arrives, the edges incident at $i$ are revealed.  

We are required to design an online algorithm $\cA$ in the following sense. At the moment a buyer $i$ arrives, the algorithm needs to match $i$ to one of its unmatched neighbors, if any; if all of $i$'s neighbors are matched, $i$ remains unmatched. The difficulty is that the algorithm does not ``know'' the edges incident at buyers which will arrive in the future and yet the size of the matching produced by the algorithm will be compared to the best {\em off-line matching}; the latter of course is a perfect matching. The formal measure for the algorithm is defined in Section \ref{sec.competitive}.

{\bf Adwords Problem} (\textsc{Adwords}):
Let $A$ be a set of $m$ {\em advertisers}, also called {\em bidders}, and $Q$ be a set of $n$ {\em queries}. A bipartite graph $G = (Q, A, E)$ is specified on vertex sets $Q$ and $A$, and edge set $E$, where for $i \in Q$ and $j \in A$, $(i, j) \in E$ if and only if bidder $j$ is {\em interested in} query $i$. Each query $i$ needs to be matched\footnote{Clearly, this is not a matching in the usual sense, since a bidder may be matched to several queries.} to at most one bidder who is interested in it. For each edge $(i, j)$, bidder $j$ {\em knows} his bid for $i$, denoted by $\bid(i, j) \in \Zplus$. Each bidder also has a {\em budget} $B_j \in \Zplus$ which satisfies $B_j \geq \bid(i, j)$, for each edge $(i, j)$ incident at $j$. 

Graph $G$ is revealed in the following manner. The $m$ bidders are known up-front and the queries  arrive one at a time. When query $i$ arrives, the edges incident at $i$ are revealed, together with the bids associated with these edges. If $i$ gets matched to $j$, then the matched edge $(i, j)$ is assigned a weight of $\bid(i, j)$. The constraint on $j$ is that the total weight of matched edges incident at it be at most $B_j$. The objective is to maximize the total weight of all matched edges at all bidders.

{\bf Adwords under Single-Valued Bidders} (\textsc{Single-Valued}):
\textsc{Single-Valued} is a special case of \textsc{Adwords} in which each bidder $j$ will make bids of a single value, $b_j \in \Zplus$, for the queries he is interested in. If $i$ accepts $j$'s bid, then $i$ will be matched to $j$ and the weight of this matched edge will be $b_j$. Corresponding to each bidder $j$, we are also given $k_j \in \Zplus$, the maximum number of times $j$ can be matched to queries. The objective is to maximize the total weight of matched edges. Observe that the matching $M$ found in $G$ is a $b$-matching with the $b$-value of each query $i$ being 1 and of advertiser $j$ being $k_j$.

{\bf Adwords under Small Bids} (\textsc{Small}):
\textsc{Small} is a special case of \textsc{Adwords} in which for each bidder $j$, each bid of $j$ is small compared to its budget. Formally, we will capture this condition by imposing the following constraint. For a valid instance $I$ of \textsc{Small}, define
\[ \mu(I) = \max_{j \in A} \ \ \left\{ {{\max_{(i, j) \in E} \ \{\bid(i, j) -1 \}} \over {B_j}} \right\} .\]
Then we require that
\[ \lim_{n(I) \rightarrow \infty} \ {\mu(I)} = 0 , \]
where $n(I)$ denotes the number of queries in instance $I$.

\subsection{The competitive ratio of online algorithms}
\label{sec.competitive}

We will define the notion of competitive ratio of a randomized online algorithm in the context of \textsc{OBM}.

\begin{definition}
	\label{def.ratio}
	Let $G = (B, S, E)$ be a bipartite graph as specified above.
	The {competitive ratio} of a randomized algorithm $\cA$ for \textsc{OBM} is defined to be:
		\[  c(\cA) =   \min_{G = (B, S, E)} \min_{\cO(B)} \ {\frac {\bE[\cA(G, \cO(B))]} {n}} , \]
where $\bE[\cA(G, \cO(B))]$ is the expected size of matching produced by $\cA$; the expectation is over the random bits used by $\cA$. We may assume that the worst case graph and the order of arrival of buyers, given by $\cO(B)$, are chosen by an adversary who knows the algorithm. It is important to note that the algorithm is provided random bits {\em after} the adversary makes its choices. 
\end{definition}

\bigskip

\begin{remark}
	\label{rem.complete-matching}
	For each problem studied in this paper, we will assume that the offline matching is complete. It is easy to extend the arguments, without changing the competitive ratio, in case the offline matching is not complete. 
\end{remark}

\section{Algorithm for \textsc{Single-Valued}}
\label{sec.Single-Adwords}

Algorithm \ref{alg.main}, which will be denoted by $\cA_1$, is an online algorithm for \textsc{Single-Valued}. Before execution of Step (1) of $\cA_1$, the order of arrival of queries, say $\cO(B)$, is fixed by the adversary. We will define several random variables whose purpose will be quite similar to that in \textsc{Ranking} and they will be given similar names as well; however, their function is not as closely tied to these economics-motivated names as in \textsc{Ranking}, see also Remark \ref{rem.names}. Three of these random variables are the {\em price} $p_j$ and {\em total revenue} $r_j$ of each bidder $j \in A$, and the {\em utility} $u_i$ of each query $i \in Q$. 

We now describe how values are assigned to these random variables in a run of Algorithm \ref{alg.main}. In Step (1), for each bidder $j$, $\cA_1$ picks a price $p_j \in [{1 \over e}, 1]$ via the specified randomized process. Furthermore, the revenue $r_j$ and {\em degree} $d_j$ of bidder $j$ are both initialized to zero, the latter represents the number of times $j$ has been matched. During the run of $\cA_1$, $j$ will get matched to at most $k_j$ queries; each match will add $b_j$ to the total revenue generated by the algorithm. $b_j$ is broken into a revenue and a utility component, with the former being added to $r_j$ and the latter forming $u_i$. At the end of $\cA_1$, $r_j$ will contain all the revenue accrued by $j$.

In Step (2), on the arrival of query $i$, we will say that bidder $j$ is {\em available} if  $(i, j) \in E$ and $d_j < k_j$. At this point, for each available bidder $j$, the {\em effective bid} of $j$ for $i$ is defined to be $\eb(j) = b_j \cdot (1 - p_j)$; clearly, $\eb(j) \in [0, b_j \cdot \left(1 - {1 \over e}\right)]$. Query $i$ accepts the bidder whose effective bid is the largest. If there are no bids, matching $M$ remains unchanged. If $i$ accepts $j$'s bid, then edge $(i, j)$ is added to matching $M$ and the weight of this edge is set to $b_j$. Furthermore, the {\em utility} of $i$, $u_i$, is defined to be $\eb(j)$ and the revenue $r_j$ of $j$ is incremented  by $b_j \cdot p_j$. Once all queries are processed, matching $M$ and its weight $W$ are output.

\bigskip
		
\begin{remark}
\label{rem.names}
\cite{Eden2018economic} had given the economics-based names of random variables for their proof of \textsc{Ranking}. Although we have used the same names for similar random variables in Sections \ref{sec.Single-Adwords} and \ref{sec.Adwords}, for \textsc{Single-Valued} and \textsc{Adwords}, the reader should not attribute an economic interpretation to these the names\footnote{We failed to come up with more meaningful names for these random variables and therefore have stuck to the old names.}.
\end{remark}

\subsection{Analysis of Algorithm \ref{alg.main}}
\label{sec.analysis}

For the analysis of Algorithm $\cA_1$, we will use the random variables $W$, $p_j, r_j$ and $u_i$ defined above; their values are fixed during the execution of $ \cA_2$. In addition, corresponding to each edge $e = (i, j) \in E$, in Definition \ref{def.threshold-Single}, we will introduce a new random variable, $u_e$, which will play a central role. 

\bigskip

\begin{lemma}
	\label{lem.add-SINGLE}
	\[ \bE [W] \ =  \sum_i^n {\bE \left[ {u_i} \right]} \ + \   \sum_j^m {\bE[r_j]}  .\]
\end{lemma}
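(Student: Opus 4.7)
The plan is to establish the identity $W = \sum_{i=1}^n u_i + \sum_{j=1}^m r_j$ pointwise --- that is, for every realization of the random prices chosen in Step (1) of Algorithm \ref{alg.main} --- and then take expectations, invoking linearity. This parallels exactly the proof of Lemma \ref{lem.add}, the only difference being that the unit contribution of each matched edge in RANKING is replaced here by the weight $b_j$ of the matched edge.

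First, I would observe that by definition of $W$ we have $W = \sum_{(i,j) \in M} b_j$, since whenever $\cA_2$ adds an edge $(i,j)$ to $M$, it assigns that edge weight $b_j$. Second, I would unpack how the algorithm splits the weight $b_j$ of each matched edge into two pieces: the utility $u_i = \eb(j) = b_j(1-p_j)$ credited to query $i$, and the increment $b_j \cdot p_j$ added to the account $r_j$ of bidder $j$. Since these two pieces sum to $b_j$, summing over all $(i,j) \in M$ gives
\[ \sum_{(i,j) \in M} b_j \;=\; \sum_{(i,j) \in M} u_i \;+\; \sum_{(i,j) \in M} b_j p_j . \]
On the right, the first term equals $\sum_{i=1}^n u_i$ under the convention that $u_i = 0$ for queries $i$ that remain unmatched, and the second term, grouped by bidder, equals $\sum_{j=1}^m r_j$, because $r_j$ is the cumulative sum of $b_j p_j$ over the (at most $k_j$) matches made to $j$, with $r_j = 0$ for bidders that are never matched.

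Combining these two observations yields the pointwise identity $W = \sum_{i=1}^n u_i + \sum_{j=1}^m r_j$. Taking expectations and applying linearity then gives the claim. I do not anticipate any real obstacle: the statement is purely a bookkeeping identity baked into the way Algorithm \ref{alg.main} defines $u_i$ and $r_j$ from each matching event. The only subtlety, distinguishing this from the RANKING case, is that a single bidder $j$ may accumulate up to $k_j$ separate contributions into the same account $r_j$ rather than a single contribution; this is handled by simply grouping the terms of the pointwise sum by bidder before taking expectations.
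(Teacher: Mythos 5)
Your proposal is correct and follows essentially the same approach as the paper's proof: establish the pointwise bookkeeping identity $W = \sum_i u_i + \sum_j r_j$ by splitting each matched edge's weight $b_j$ into $u_i = b_j(1-p_j)$ and the increment $b_j p_j$ to $r_j$, then apply linearity of expectation. The subtlety you note about grouping multiple contributions into a single account $r_j$ is exactly the point the paper is (tersely) making, so there is no meaningful divergence.
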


\begin{proof}
	For each edge $(i, j) \in M$, its contribution to $W$ is $b_j$. Furthermore, the sum of $u_i$ and the contribution of $(i, j)$ to $r_j$ is also $b_j$. This gives the first equality below. The second equality follows from linearity of expectation.
		\[ \bE [W] \ = \bE \left[ \sum_{i = 1}^n {u_i} \ + \ \sum_{j = 1}^m {r_j} \right] 
		\ =  \sum_i^n {\bE \left[ {u_i} \right]} \ + \   \sum_j^m {\bE[r_j]}   ,\]
\end{proof}

\bigskip

\setcounter{figure}{1} 

\begin{figure}

	\begin{wbox}
		\begin{alg}
		\label{alg.main}
		{\bf ($\cA_1$: Algorithm for \textsc{Single-Valued})}\\

		\begin{enumerate}
			\item {\bf Initialization:} $M \leftarrow \emptyset$. \\
			 $\forall j \in A$, do:
				\begin{enumerate}
				\item Pick $w_j$ uniformly from $[0, 1]$ and 
			set price \ $p_j \leftarrow e^{w_j - 1}$.
			\item $r_j \leftarrow 0$.
			\item $d_j \leftarrow 0$.
			\end{enumerate}
			
		\bigskip
		
		\item {\bf Query arrival:} When query $i$ arrives, {\bf do}:
			\begin{enumerate}
			\item  $\forall j \in A$ s.t. $(i, j) \in E$ and $d_j < k_j$ {\bf do}: 
			\begin{enumerate}
				\item $\eb(j) \leftarrow b_j \cdot (1 - p_j)$.
				\item Offer effective bid of $\eb(j)$ to $i$.
			\end{enumerate}

			\item Query $i$ accepts the bidder whose effective bid is the largest. \\
			      (If there are no bids, matching $M$ remains unchanged.) \\
				If $i$ accepts $j$'s bid, then {\bf do}:
			\begin{enumerate}
			\item Set utility: \ $u_i \leftarrow b_j \cdot (1 - p_j)$.
			\item Update revenue: \ $r_j \leftarrow r_j + b_j \cdot p_j$.
			\item Update degree: \ $d_j \leftarrow d_j + 1$.
			\item Update matching: \ $M \leftarrow M \cup (i, j)$. Define the weight of $(i, j)$ to be $b_j$.
			\end{enumerate} 
			\item {\bf Output:} Output matching $M$ and its total weight $W$. 
		\end{enumerate} 
				\end{enumerate} 
			\end{alg}
	\end{wbox}
\end{figure}


\begin{definition}
	\label{def.R_j}
We will define several runs of Algorithm \ref{alg.main}. In these runs, we will assume Step (1) is executed once. We next define several ways of executing Step (2). Let $\cR$ denote the run of Step (2) on the entire graph $G$.  Corresponding to each bidder $j \in A$, let $G_j$ denote graph $G$ with bidder $j$ removed. Define $\cR_j$ to be the run of Step (2) on graph $G_j$. 
\end{definition}

Lemma \ref{lem.S-subset-j} and Corollary \ref{cor.S-subset-j} given below establish a relationship between the available bidders for a query $i$ in the two runs $\cR$ and $\cR_j$. Note that bidders are available in multiplicity and therefore we will have to use the notion of a multiset rather than a set, as was done in \cite{Va.MFCS}.

A {\em multiset} contains elements with multiplicity. Let $A$ and $B$ be two multisets over $n$ elements $\{1, 2, \ldots n\}$, and let $a_i \geq 0$ and $b_i \geq 0$ denote the multiplicities of element $i$ in $A$ and $B$, respectively. We will say that $A \subseteq B$ if for each $i$, $a_i \leq b_i$, and $A = B$ if for each $i$, $a_i = b_i$. We will say that $i \in A$ if $a_i \geq 1$. We will define $A \cap B$ to be the multiset containing each element $i$ exactly $\min \{a_i, b_i \}$ times, and $A - B$ to be the multiset containing each element $i$ exactly $\max \{a_i - b_i, 0\}$ times. 

As before, let us renumber the queries so their order of arrival under $\cO(B)$ is $1, 2, \ldots n$. Let $T(i)$ and $T_j(i)$ denote the multisets of available bidders at the time of arrival of query $i$ (i.e., just before the query $i$ gets matched) in runs $\cR$ and $\cR_j$, respectively. In particular, $T(1)$ will contain $k_l$ copies of $l$ for each bidder $l$ and $T_j(1)$ will contain $k_l$ copies of $l$ for each bidder $l$, other than $j$. Similarly, let $S(i)$ and $S_j(i)$ denote the projections of $T(i)$ and $T_j(i)$ on the neighbors of $i$ in $G$ and $G_j$, respectively. 

We have assumed that Step (1) of Algorithm \ref{alg.main} has already been executed and a price $p_k$ has been assigned to each bidder $k$. The effective bid of bidder $k$ is $\eb(k) = b_k \cdot (1 - p_k)$. With probability 1, the effective bids of all bidders are distinct. Let $F_1$ be the multiset containing $k_l$ copies of $l$ for each $l \in A$ such that $b_l \cdot (1 - p_l) > b_j \cdot (1 - p_j)$. Similarly, let $F_2$ be the multiset containing $k_l$ copies of $l$ for each $l \in A$ such that and $b_l \cdot (1 - p_l) < b_j \cdot (1 - p_j)$. Observe that $j$ is not contained in either multiset.  

\bigskip

\begin{lemma}
	\label{lem.S-subset-j}
	For each $i$, $1 \leq i \leq n$, the following hold:
	\begin{enumerate}
		\item $(T_j(i) \cap F_1) =  (T(i) \cap F_1)$.
		\item $(T_j(i) \cap F_2) \subseteq  (T(i) \cap F_2)$.
	\end{enumerate}
\end{lemma}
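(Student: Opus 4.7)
The plan is to mirror the proof of Lemma \ref{lem.subset-j}, carrying out the bookkeeping in terms of multiplicities rather than set membership. For statement (1) I would argue directly, without induction. Because $j \notin F_1$, the multisets of $F_1$-bidders in $G$ and in $G_j$ coincide, each containing $k_l$ copies of every $l$ with $\eb(l) > \eb(j)$. Since every bidder in $F_1$ has effective bid strictly greater than that of any bidder outside $F_1$, and each query matches to its available bidder of largest effective bid, whenever any $F_1$-bidder is available to a query, that query matches to an $F_1$-bidder, in both $\cR$ and $\cR_j$. Thus the two runs execute the same sequence of matches on $F_1$, giving $T_j(i) \cap F_1 = T(i) \cap F_1$ for every $i$.

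For statement (2) I would induct on $i$. The base case $i=1$ is immediate, since $T(1)$ and $T_j(1)$ agree off the $k_j$ copies of $j$, which lie outside $F_2$. For the inductive step, by statement (1) applied at $i=k$, if any $F_1$-bidder is available to query $k$ in either run then both runs match to the same $F_1$-bidder, and neither $T(\cdot) \cap F_2$ nor $T_j(\cdot) \cap F_2$ is affected. Otherwise, $S(k) \cap F_1 = S_j(k) \cap F_1 = \emptyset$. Let $l$ be the bidder (possibly null) matched to $k$ in $\cR_j$, so $l$ is the largest-effective-bid element of $S_j(k) \cap F_2$; and let $l'$ be the bidder matched to $k$ in $\cR$, which is either an available copy of $j$ or the largest-effective-bid element of $S(k) \cap F_2$.

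If $l'$ is a copy of $j$ or is null, then $T(k+1) \cap F_2 = T(k) \cap F_2$, and trivially $T_j(k+1) \cap F_2 \subseteq T_j(k) \cap F_2 \subseteq T(k) \cap F_2 = T(k+1) \cap F_2$. Otherwise $l' \in F_2$, and I would isolate a short auxiliary multiset fact: if $A \subseteq B$ are multisets over a common ground set, $N$ is a subset of the ground set, $l$ is the largest-effective-bid element of $A \cap N$, and $l'$ is the largest-effective-bid element of $B \cap N$, then $A - \{l\} \subseteq B - \{l'\}$. The proof is a quick case split: either $l = l'$, in which case equal decrements preserve the inclusion; or $l \neq l'$, in which case $l'$ must have multiplicity $0$ in $A$ (otherwise $l' \in A \cap N$ would contradict the maximality of $l$), so decrementing distinct elements preserves every elementwise inequality $A[\cdot] \leq B[\cdot]$. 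Applying this fact with $A = T_j(k) \cap F_2$, $B = T(k) \cap F_2$, and $N$ the neighborhood of $k$ in $G$ completes the induction.

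The main obstacle I anticipate is exactly this multiset-bookkeeping step: in the OBM case sets suffice and the argument reduces to ``the second run has weakly more options,'' whereas in the multiset setting a newly best element $l'$ can arise from an excess multiplicity absent from the smaller multiset, and one must verify that decrementing distinct elements $l$ and $l'$ still preserves the inclusion. Once the auxiliary fact is established, the rest of the proof transfers essentially verbatim from Lemma \ref{lem.subset-j}.
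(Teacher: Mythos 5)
Your proof is correct and follows the same route as the paper's: statement (1) by the observation that both runs act identically on $F_1$ since $j \notin F_1$, and statement (2) by induction on $i$, reducing via (1) to the case of no available $F_1$-bidders and then comparing the bidder matched in $\cR_j$ to the one matched in $\cR$. The one place you go beyond the paper is the explicit auxiliary multiset fact verifying that decrementing $l$ from $T_j(k)$ and $l'$ from $T(k)$ preserves the multiset inclusion (using that if $l \neq l'$ then $l'$ has multiplicity $0$ in $T_j(k) \cap F_2 \cap N$ by maximality of $l$ and distinctness of effective bids); the paper leaves this bookkeeping implicit with the phrase ``in each of these cases, the induction step holds,'' so your version is a tighter write-up of the same argument rather than a different one.
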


\begin{proof}
	{\bf 1).} Clearly, in both runs, $\cR$ and $\cR_j$, any query having an available bidder in $F_1$ will match to the most profitable one of these, without even considering the rest of the bidders.  Since $j \notin F_1$, the two runs behave in an identical manner on the set $F_1$, thereby proving the first statement. 
	 
	{\bf 2).} The proof is by induction on $i$. The base case is trivially true because $(T_j(1) \cap F_2) =  (T(1) \cap F_2)$, since $j \notin F_2$. Assume that the statement is true for $i = k$ and let us prove it for $i = k+1$. By the first statement, we need to consider only the case that there are no available bidders for the $k^{th}$ query in $F_1$ in the runs $\cR$ and $\cR_j$. Assume that in run $\cR_j$, this query gets matched to bidder $l$; if it remains unmatched, we will take $l$ to be null. Clearly, $l$ is the most profitable bidder it is incident to in $T_j(k)$. Therefore, the most profitable bidder it is incident to in run $\cR$ is the best of $l$, the most profitable bidder in $T(k) - T_j(k)$, and $j$, in case it is available. In each of these cases, the induction step holds.
\end{proof}

In the corollary below, the first two statements follow from Lemma \ref{lem.S-subset-j} and the third statement follows from the first two statements. 

\bigskip


\bigskip

\begin{corollary}
	\label{cor.S-subset-j}
		For each $i$, $1 \leq i \leq n$, the following hold:
	\begin{enumerate}
		\item $(S_j(i) \cap F_1) =  (S(i) \cap F_1)$.
		\item $(S_j(i) \cap F_2) \subseteq  (S(i) \cap F_2)$.
		\item $S_j(i) \subseteq S(i)$.
	\end{enumerate}
\end{corollary}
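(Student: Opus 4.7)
The plan is to obtain all three statements by pushing the multiset relations of Lemma \ref{lem.S-subset-j} from $T_j(i),T(i)$ down to their projections $S_j(i),S(i)$, and then to combine (1) and (2) to get (3). Concretely, recall that $S(i)$ is the sub-multiset of $T(i)$ supported on $N_G(i)$ and $S_j(i)$ is the sub-multiset of $T_j(i)$ supported on $N_{G_j}(i)=N_G(i)\setminus\{j\}$; so projecting a multiset equation against $N_G(i)$ is exactly the operation that turns $T$'s into $S$'s, provided the equation is already ``blind'' to $j$.

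For (1) and (2), the key observation is that $j$ belongs to neither $F_1$ nor $F_2$, so intersecting a multiset against $F_1$ or $F_2$ already throws $j$ away; hence restricting to $N_G(i)$ and restricting to $N_{G_j}(i)=N_G(i)\setminus\{j\}$ produce the same result on these pieces. Therefore, intersecting both sides of $(T_j(i)\cap F_1)=(T(i)\cap F_1)$ with $N_G(i)$ yields $(S_j(i)\cap F_1)=(S(i)\cap F_1)$, which is (1); and intersecting both sides of $(T_j(i)\cap F_2)\subseteq(T(i)\cap F_2)$ with $N_G(i)$ yields (2).

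For (3), I exploit that with probability $1$ all effective bids are distinct, so the multisets $F_1$ and $F_2$ together partition $A\setminus\{j\}$. Since $j$ has been removed in run $\cR_j$, the multiset $S_j(i)$ is supported entirely on $A\setminus\{j\}$, and thus decomposes as the disjoint-multiset union $(S_j(i)\cap F_1)\cup(S_j(i)\cap F_2)$. Applying (1) and (2) piece by piece then gives $S_j(i)\subseteq(S(i)\cap F_1)\cup(S(i)\cap F_2)\subseteq S(i)$. I do not anticipate a genuine obstacle; the only care needed is bookkeeping at the level of multisets, ensuring that the projection-, intersection-, and union-operations act pointwise on multiplicities in the sense defined just before Lemma \ref{lem.S-subset-j}.
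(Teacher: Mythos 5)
Your proposal is correct and matches the paper's intended proof exactly: the paper simply states that the first two parts follow from Lemma \ref{lem.S-subset-j} by projection and the third follows from the first two, and your argument fills in precisely those steps, including the crucial observation that $j \notin F_1 \cup F_2$ makes the projection onto $N_G(i)$ and $N_{G_j}(i)$ agree on the $F_1$- and $F_2$-parts. The only minor cosmetic point is that the paper defines $\cap$ and $-$ for multisets but not $\cup$; your disjoint decomposition of $S_j(i)$ into its $F_1$- and $F_2$-pieces is harmless and can be restated element-by-element if one prefers to stay strictly within the paper's vocabulary.
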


\bigskip

Next we define a new random variable, $u_e$, for each edge $e = (i, j) \in E$. This is called the {\em truncated threshold} for edge $e$ and is given in Definition \ref{def.threshold-Single}. It is critically used in the proofs of Lemmas \ref{lem.u_e-Single} and \ref{lem.star}.

\begin{definition}
	\label{def.threshold-Single}
	Let $e = (i, j) \in E$ be an arbitrary edge in $G$. Define random variable, $u_e$, called the {\em truncated threshold} for edge $e$, to be $u_e = \min \{\ut_i, \ b_j \cdot \left(1 - {\frac 1 e}\right) \}$, where $\ut_i$ is the utility of query $i$ in run $\cR_j$.
	\end{definition}
	
	\bigskip

\begin{definition}
	\label{def.j-star}
	Let $j \in A$. Henceforth, we will denote $k_j$ by $k$ in order to avoid triple subscripts. Let $i_1, \ldots , i_k$ be queries such that for $1 \leq l \leq k$, $(i_l, j) \in E$. Then $(j; \ i_1, \ldots , i_k)$ is called a {\em $j$-star}. Let $X_j$ denote this $j$-star. The contribution of $X_j$ to $\bE[W]$ is $\bE[r_j] + \sum_{l = 1}^k {\bE[u_{i_l}]}$, and it will be denote by $\bE [X_j]$.
\end{definition}	
	
	\bigskip

Corresponding to $j$-star $X_j = (j; \ i_1, \ldots , i_k)$, denote by $e_l$ the edge $(i_l, j) \in E$, for $1 \leq l \leq k$. Furthermore, let $u_{e_l}$ denote the truncated threshold random variable corresponding to $e_l$. 

\bigskip

\begin{property}
	\label{prop.SV-no-surpass}
	{\bf (No-Surpassing for \textsc{Single-Valued})} 
Assume that Step 1 of Algorithm \ref{alg.main} has been executed and a price $p_k$ has been assigned to each advertiser $k$. Suppose that the effective bid which query $i$ gets in run $\cR_j$ is less than $b_j \cdot (1 - p_j)$; the latter is clearly the effective bid which $j$ makes to $i$ in run $\cR$. Then, in run $\cR$, no bid to $i$ will surpass $\eb(j) = b_j \cdot (1 - p_j)$.
\end{property}


\bigskip

\begin{lemma}
	\label{lem.SV-no-surpassing}
	The No-Surpassing Property holds for \textsc{Single-Valued}. 
\end{lemma}

\begin{proof}
Suppose the bid of $j$, namely $b_j \cdot (1 - p_j)$, is better than the best bid that buyer $i$ gets in run $\cR_j$. If so, $i$ gets no bid from $F_1$ in $\cR_j$; observe that they are all higher than $b_j \cdot (1 - p_j)$. Now, by the first part of Corollary \ref{cor.S-subset-j}, $i$ gets no bid from $F_1$ in run $\cR$ as well, i.e., in run $\cR$, no bid to $i$ will surpass $b_j \cdot (1 - p_j)$.
\end{proof}

\bigskip

\begin{lemma}
	\label{lem.u_e-Single}
	Corresponding to $j$-star $X_j = (j; \ i_1, \ldots , i_k)$, the following hold. 
	\begin{itemize}
	    \item For $1 \leq l \leq k$, \ $u_{i_l} \geq u_{e_l}$. 
	   	\end{itemize}
\end{lemma}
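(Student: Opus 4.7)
The plan is to mirror the proof of the first part of Lemma~\ref{lem.R} for RANKING, using the multiset analogue of Corollary~\ref{cor.subset-j} already provided by Corollary~\ref{cor.S-subset-j}, and then absorb the truncation in the definition of $u_{e_l}$ essentially for free.

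First I would unpack the definitions. Once Step~(1) of $\cA_2$ has been executed, each bidder $k$ has a fixed effective bid $b_k(1-p_k)$. The utility $u_{i_l}$ produced by run $\cR$ equals $\max_{k \in S(i_l)} b_k(1-p_k)$, adopting the convention that the maximum over an empty multiset is $0$ so that the case where $i_l$ receives no bid is handled uniformly. Analogously, the utility $\ut_{i_l}$ of $i_l$ in run $\cR_j$ is $\max_{k \in S_j(i_l)} b_k(1-p_k)$.

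The main step is then a direct appeal to part~3 of Corollary~\ref{cor.S-subset-j}, which gives $S_j(i_l) \subseteq S(i_l)$ as multisets. Since the effective bids are a fixed collection of numbers determined by Step~(1), and independent of which of the two runs we consider, taking a maximum over the larger multiset can only be (weakly) larger, so $u_{i_l} \geq \ut_{i_l}$. Finally, by Definition~\ref{def.threshold-Single}, $u_{e_l} = \min\{\ut_{i_l},\, b_j \cdot (1 - 1/e)\} \leq \ut_{i_l}$, and chaining the two inequalities yields $u_{i_l} \geq u_{e_l}$, as desired.

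I do not anticipate a real obstacle here: the combinatorial containment that drives the argument has already been isolated in Corollary~\ref{cor.S-subset-j}, and the truncation is introduced purely to simplify the later probabilistic manipulations in Lemma~\ref{lem.star} (where it will let us apply the law of total expectation cleanly). It plays no active role in the present comparison beyond the trivial observation that the minimum of $\ut_{i_l}$ with anything else is at most $\ut_{i_l}$.
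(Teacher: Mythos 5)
Your proposal is correct and takes essentially the same route as the paper: both argue via the multiset containment $S_j(i_l) \subseteq S(i_l)$ from Corollary~\ref{cor.S-subset-j} to get $u_{i_l} \geq \ut_{i_l}$, and then observe that truncation only decreases the right-hand side. The paper's proof is simply a terser rendering of the same two steps.
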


\begin{proof}
By the third statement of Corollary \ref{cor.S-subset-j}, $i_l$ has more options in run $\cR$ as compared to run $\cR_j$. Furthermore, the truncation of the random variable only aids the inequality needed and therefore $u_{i_l} \geq u_{e_l}$.
\end{proof}

Our next goal is to lower bound the contribution of an arbitrary $j$-star, $\bE [X_j]$, which in turn involves lower bounding $\bE [r_j]$. The latter crucially uses the fact that $p_j$ is independent of $u_{e_l}$. This follows from the fact that $u_{e_l}$ is determined by run $\cR_j$ on graph $G_j$, which does not contain vertex $j$. 

\bigskip

\begin{lemma}
	\label{lem.star}
	Let $j \in A$ and let $X_j = (j; \ i_1, \ldots , i_k)$ be a $j$-star. Then 
	\[ \bE [X_j] \ \geq \ k \cdot b_j \cdot \left(1 - {\frac 1 e}\right) . \]
\end{lemma}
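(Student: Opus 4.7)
The plan is to mirror the proof of Lemma \ref{lem.edge} for RANKING, accounting for the fact that bidder $j$ has capacity $k = k_j$ so $r_j$ accumulates over up to $k$ matches. The strategy is to show, for each $l \in \{1, \ldots, k\}$ separately, the per-query inequality
\[ \bE[u_{i_l}] + \frac{1}{k} \bE[r_j] \ \geq \ b_j \left(1 - \frac{1}{e}\right), \]
and then sum these $k$ inequalities to recover $\bE[X_j] = \sum_l \bE[u_{i_l}] + \bE[r_j] \geq k b_j(1 - 1/e)$.

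Fix $l$ and condition on $u_{e_l} = z$ for $z \in [0, b_j(1-1/e)]$. By Lemma \ref{lem.u_e-Single}, $\bE[u_{i_l}] \geq \bE[u_{e_l}]$, so it suffices to lower bound $\bE[r_j]$ by $k(b_j(1-1/e) - \bE[u_{e_l}])$. When $z < b_j(1-1/e)$ the truncation is inactive, so $\ut_{i_l} = z$; if additionally $p_j < 1 - z/b_j$, then the effective bid $b_j(1-p_j)$ of $j$ strictly exceeds the best bid $\ut_{i_l}$ received by $i_l$ in $\cR_j$. By the No-Surpassing Property (Lemma \ref{lem.SV-no-surpassing}), no bid to $i_l$ in $\cR$ can then surpass $b_j(1-p_j)$; hence when $i_l$ arrives in $\cR$, either $j$ still has capacity and $i_l$ matches $j$, or $j$ has already been saturated by earlier queries. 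In either scenario $d_j = k$ at the end of $\cR$ (since $d_j$ is monotonically non-decreasing), so $r_j = k \cdot b_j \cdot p_j$. The remaining boundary case $z = b_j(1-1/e)$ would require $p_j < 1/e$, which is impossible, so it contributes $0$ to the lower bound, matching $k(b_j(1-1/e) - z)$ at $z = b_j(1-1/e)$.

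Since $u_{e_l}$ is determined by run $\cR_j$, which omits $j$, it is independent of $p_j = e^{w_j - 1}$. Integrating $k \cdot b_j \cdot p_j$ against the density of $p_j$ on $\{p_j < 1 - z/b_j\}$ -- the same computation as in the proof of Lemma \ref{lem.edge} but scaled by $k$ -- yields $\bE[r_j \mid u_{e_l} = z] \geq k(b_j(1-1/e) - z)$, and the law of total expectation then gives $\bE[r_j] \geq k(b_j(1-1/e) - \bE[u_{e_l}])$. Combined with $\bE[u_{i_l}] \geq \bE[u_{e_l}]$, this establishes the per-query inequality above; summing over $l$ completes the proof. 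I expect the main subtlety to be the step that deduces $d_j = k$ from the No-Surpassing Property: one must merge the two sub-cases -- $i_l$ matching $j$ directly versus $j$ having been saturated by other queries -- into a single clean conclusion, and verify that the truncation used in Definition \ref{def.threshold-Single} does not disturb the independence-and-integration argument (indeed, this is precisely the purpose for which the truncated threshold was introduced).
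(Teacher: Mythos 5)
The key step in your argument is false: conditioning only on $u_{e_l}=z$, the fact that $p_j<1-z/b_j$ does \emph{not} imply $d_j=k$ at the end of $\cR$. Your case analysis correctly shows that when $i_l$ arrives either $j$ is already saturated (and then indeed $d_j=k$) or $i_l$ matches $j$ directly --- but in the second case $d_j$ merely increments by one; nothing about the single threshold $u_{e_l}$ constrains what happens with $j$'s other potential matches, so $d_j$ may end the run well below $k$. (Example: $k=2$, $i_1$ matches $j$, but the only other neighbor of $j$ is snapped up by a cheaper bidder.) Consequently the bound you can honestly extract from this conditioning is $\bE[r_j\mid u_{e_l}=z]\geq b_j(1-1/e)-z$, a factor of $k$ weaker than what you claimed, and the per-query inequality $\bE[u_{i_l}]+\frac1k\bE[r_j]\geq b_j(1-1/e)$ no longer follows. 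Summing the weaker per-$l$ bound instead yields $\sum_l\bE[u_{i_l}]+k\,\bE[r_j]\geq kb_j(1-1/e)$, which over-counts $\bE[r_j]$ by a factor of $k$ and is useless.

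The paper avoids this by conditioning on all $k$ truncated thresholds $(u_{e_1},\ldots,u_{e_k})$ \emph{jointly}. With the joint conditioning one can define $S(x)=\{l:x<w_l\}$ and prove (Claim \ref{claim.alpha}) that the degree of $j$ at termination is at least $|S(x)|$: if any $i_l$ with $l\in S(x)$ finds $j$ already saturated then $d_j=k\geq|S(x)|$, and otherwise every such $i_l$ contributes a distinct increment. Integrating $|S(x)|$ against the density of $p_j$, one obtains $\bE[r_j\mid\ldots]\geq b_j\sum_l(1-1/e-z_l)$, and the law of total expectation then delivers $\bE[r_j]\geq kb_j(1-1/e)-\sum_l\bE[u_{e_l}]$ directly --- the factor $k$ appears as a sum over $l$ of per-edge contributions to $r_j$, not as a per-edge ``all-or-nothing'' degree. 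Your overall scaffolding (truncated threshold, independence of $p_j$ from the $u_{e_l}$'s, the inequality $\bE[u_{i_l}]\geq\bE[u_{e_l}]$, invoking No-Surpassing) is the right set of ingredients; what is missing is the joint-conditioning and $|S(x)|$-counting device that replaces your per-query decomposition.
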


\begin{proof}	
	We will first lower bound $\bE [r_j]$. Let $f_U(b_j \cdot z_1, \ldots b_j \cdot z_k)$ be the joint probability density function of $(u_{e_1}, \ldots u_{e_k})$; clearly, $f_U(b_j \cdot z_1, \ldots b_j \cdot z_k)$ can be non-zero only if $z_l \in [0, 1 - {1 \over e}]$, for $1 \leq l \leq k$. By the law of total expectation, 
	
	\[ \bE [r_j] \ = \int_{(z_1, \ldots, z_k)} {\bE [r_j ~|~ u_{e_1} = b_j \cdot z_1, \ldots , u_{e_k} = b_j \cdot z_k] \cdot f_U(b_j \cdot z_1, \ldots b_j \cdot z_k) \ dz_1 \ldots dz_k} ,\]
	where the integral is over $z_l \in [0, \left(1 - {\frac 1 e}\right)]$, for $1 \leq l \leq k$. 
	
	For lower-bounding the conditional expectation in this integral, let $w_l \in [0, 1]$ be s.t. $e^{w_l -1} = 1-z_l$, for $1 \leq l \leq k$. For $x \in [0, 1]$, define the set $S(x) = \{l \ | \ 1 \leq l \leq k \ \mbox{and}\  x < w_l \}$. 
	
		
	\begin{claim}
		\label{claim.alpha}
		Conditioned on $(u_{e_1} = b_j \cdot z_1, \ldots , u_{e_k} = b_j \cdot z_k)$, if $p_j = e^{x-1}$, then the degree of $j$ at the end of Algorithm $ \cA_2$ is at least $|S(x)|$, i.e., the contribution to $r_j$ in this run was $\geq b_j \cdot p_j \cdot |S(x)|$. 
	\end{claim}
	
\begin{proof}
Suppose $l \in S(x)$, then $x < w_l$. In run $\cR_j$, the maximum effective bid that $i_l$ received has value $b_j \cdot z_l$. In run $\cR$, if at the arrival of query $i_l$, $j$ is already fully matched, the contribution to $r_j$ in this run was $k \cdot b_j \cdot p_j$ and the claim is obviously true. If not, then since $x < w_l$, $b_j \cdot (1 - p_j) > b_j \cdot z_l$. The crux of the matter is that by Lemma \ref{lem.SV-no-surpassing}, the No-Surpassing Property holds. Therefore, query $i_l$ will receive its largest effective bid from $j$, $i_l$ will get matched to it, and $r_j$ will be incremented by $b_j \cdot p_j$. The claim follows.
\end{proof}

	For $1 \leq l \leq k$, define indicator functions $I_l : [0, 1] \rightarrow \{0, 1 \}$ as follows.
	   \[
            I_l(x) \ \ = \ \begin{cases}
                1 & \text{if } x < w_l, \\
                0 & \text{otherwise.}
            \end{cases}
        \]
Clearly, $|S(x)| = \sum_{l = 1}^k {I_j(x)}$. By Claim \ref{claim.alpha}, 
\[ \bE [r_j ~|~ u_{e_1} = b_j \cdot z_1, \ldots , u_{e_k} = b_j \cdot z_k]  \ \geq \ b_j \cdot \int_0^{1} {|S(x)| \cdot e^{x-1} \ dx}  \]

\[ = \ b_j \cdot \int_0^{1} {\sum_{l = 1}^k {I_l(x)} \cdot e^{x-1} \ dx} \
= \ b_j \cdot \sum_{l = 1}^k { \int_0^{1} { {I_l(x)} \cdot e^{x-1} } \ dx} \
  = \ b_j \cdot \sum_{l = 1}^k {\int_0^{w_l} {e^{x-1} \ dx} } \]
  
\[ = \ b_j \cdot \sum_{l = 1}^k {\left( e^{w_l -1} - {\frac 1 e} \right)} \ = \ b_j \cdot \sum_{l = 1}^k \left( 1 - {\frac 1 e} - z_l \right) . \]
	
Since $I_l(x) = 0$ for $x \in [w_l, 1]$, we get that $\int_0^{1} { {I_l(x)} \cdot e^{x-1} } \ dx \ = \ \int_0^{w_l} {e^{x-1} }\ dx$; this fact has been used above. Therefore, 
	\[ \bE [r_j] \ = \int_{(z_1, \ldots, z_k)} {\bE [r_j ~|~ u_{e_1} = b_j \cdot z_1, \ldots , u_{e_k} = b_j \cdot z_k] \cdot f_U(b_j \cdot z_1, \ldots b_j \cdot z_k) \ dz_1 \ldots dz_k} \]

\[ \geq b_j \cdot  \int_{(z_1, \ldots, z_k)}  {\sum_{l = 1}^k \left( 1 - {\frac 1 e} - z_l \right) \cdot f_U(b_j \cdot z_1, \ldots b_j \cdot z_k) \ dz_1 \ldots dz_k} \] 
	
\[	= \  k \cdot b_j \cdot \left( 1 - {\frac 1 e} \right) -  \sum_{l = 1}^k {\bE[u_{e_l}]}  ,\]
where both integrals are over $z_l \in [0, \left(1 - {\frac 1 e}\right)]$, for $1 \leq l \leq k$.

	By Lemma \ref{lem.u_e-Single}, $\bE [u_{i_l}] \geq \bE [u_{e_l}]$, for $1 \leq l \leq k$. Hence we get
\[ \bE [X_j]  \ = \ \bE[r_j] + \sum_{l = 1}^k {\bE[u_{i_l}]} \ 
\geq   \  k  \cdot b_j \cdot \left( 1 - {\frac 1 e} \right) , \]
\end{proof}

\bigskip

\begin{lemma}
	\label{lem.add}
	\[ \bE [W] \ =  \sum_i^n {\bE \left[ {u_i} \right]} \ + \   \sum_j^m {\bE[r_j]}  .\]
\end{lemma}

\begin{proof}
	By definition of the random variables, 
		\[ \bE [W] \ = \bE \left[ \sum_{i = 1}^n {u_i} \ + \ \sum_{j = 1}^m {r_j} \right] 
		\ =  \sum_i^n {\bE \left[ {u_i} \right]} \ + \   \sum_j^m {\bE[r_j]}   ,\]
where the first equality follows from the fact that if $(i, j) \in M$ then $W$ is incremented by $b_j$ and $u_i + r_j = b_j$. The second equality follows from linearity of expectation.  
\end{proof}

\begin{theorem}
	\label{thm.Single}
	The competitive ratio of Algorithm $ \cA_2$ is at least $1 - {\frac 1 e}$. Furthermore, it is budget-oblivious. 
\end{theorem}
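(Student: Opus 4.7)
The plan is to reassemble the decomposition of $\bE[W]$ from Lemma \ref{lem.add-SINGLE} against an optimal offline $b$-matching, using Lemma \ref{lem.star} as the per-bidder lower bound. By Remark \ref{rem.complete-matching}, I would first assume that the optimal offline matching $P$ uses the full capacity of every bidder, i.e., each $j \in A$ is matched in $P$ to exactly $k_j$ queries $i_1^j, \ldots, i_{k_j}^j$, so that $\opt = \sum_{j \in A} k_j \cdot b_j$.

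Next, starting from the identity in Lemma \ref{lem.add-SINGLE}, I would regroup the utility terms according to which bidder each query is matched to in $P$. Since $P$ matches each query to exactly one bidder, every $\bE[u_i]$ term is attached to exactly one bidder's group, and I get
\[
\bE[W] \;=\; \sum_{j \in A}\Bigl( \bE[r_j] + \sum_{l=1}^{k_j} \bE[u_{i_l^j}] \Bigr) \;=\; \sum_{j \in A} \bE[X_j^P],
\]
where $X_j^P$ denotes the $j$-star $(j;\, i_1^j, \ldots, i_{k_j}^j)$ in the sense of Definition \ref{def.j-star}. Applying Lemma \ref{lem.star} to each $X_j^P$ yields $\bE[X_j^P] \geq k_j \cdot b_j \cdot (1 - 1/e)$, and summing over $j \in A$ gives $\bE[W] \geq (1 - 1/e) \cdot \opt$, which is the claimed competitive ratio.

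For the budget-obliviousness claim, I would argue by direct inspection of Algorithm \ref{alg.main}. The only data consulted during Step (2) are the bid $b_j$, the random price $p_j$ drawn once in Step (1) independently of any budget information, and the Boolean test $d_j < k_j$. In particular, the numeric value of the remaining capacity $k_j - d_j$ (equivalently, the fraction of budget spent) is never used; the algorithm only needs to detect whether any budget remains. This is exactly the notion of budget-obliviousness described in the introduction.

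The heavy lifting — the joint-density integration and the no-surpassing combinatorics — is already absorbed into Lemma \ref{lem.star}, so the theorem itself is essentially a reindexing argument plus an inspection of the pseudocode. The one mild subtlety I foresee is the completeness assumption on $P$: if the optimal $b$-matching fails to saturate some $k_j$, one restricts the regrouping to the $P$-matched queries and carries non-negative slack through the sum, exactly as in Remark \ref{rem.not-perfect}. I do not anticipate any further obstacle.
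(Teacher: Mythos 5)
Your proof is correct and follows the paper's argument essentially verbatim: decompose $\bE[W]$ via Lemma \ref{lem.add-SINGLE}, regroup the terms into $j$-stars induced by the offline $b$-matching $P$, lower-bound each via Lemma \ref{lem.star}, and verify budget-obliviousness by inspecting that Step (2) consults only the Boolean test $d_j < k_j$. No discrepancies.
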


\begin{proof}
	Let $P$ denote a maximum weight $b$-matching in $G$, computed in an offline manner. By the assumption made in Remark \ref{rem.complete-matching}, its weight is
	\[ w(P) = \sum_{j = 1}^m {k_j \cdot b_j} . \]
	 Let $T_j$ denote the $j$-star, under $P$, corresponding to each $j \in A$. The expected weight of matching produced by $ \cA_2$ is 
	\[ \bE \left[ W \right] \ = \ \sum_{i= 1}^n {\bE \left[ {u_i} \right]} \ + \   \sum_{j = 1}^m {\bE[r_j]}   	\ = \ \sum_{j=1}^m {\bE [T_j]} \ \geq \ \sum_{j = 1}^m {b_j \cdot k_j} \left( 1 - {\frac 1 e} \right) \ = \ \left( 1 - {\frac 1 e} \right) \cdot w(P) ,\]
	where the first equality uses Lemma \ref{lem.add}, the second follows from linearity of expectation and the inequality follows from Lemma \ref{lem.star}. 
	
	Finally, Algorithm $ \cA_2$ is budget-oblivious because it does not need to know $k_j$ for bidders $j$; it only needs to know during a run whether the $k_j$ bids available to bidder $j$ have been exhausted. The theorem follows. 
\end{proof}

\section{Algorithm for \textsc{Small}, After Assuming the No-Surpassing Property}
\label{sec.SMALL}

Two new difficulties arise for the problem \textsc{Adwords} The first is the inherent structural difficulty described in Section \ref{sec.diff-GENERAL}. Second, since bidders can have different bids for different queries, the no-surpassing property does not hold anymore, see Example \ref{ex.GEN-no-surpass}.

\begin{example}
	\label{ex.GEN-no-surpass}
	Assume that in the given instance for \textsc{Adwords}, $j, j'$ are two of the bidders, and $1, \ldots , k$ are $k$ of the queries, where $k$ is a large number. Assume $\bid(l, j) = \alpha$ for $1 \leq l \leq k$ and $\bid(l, j') = \alpha -1$ for $1 \leq l \leq k-1$. Further, assume that $\bid(k, j') = (\alpha - 1) \cdot (k-1)$. Let the budgets be $B_j = \alpha \cdot k$ and $B_{j'} = (\alpha - 1) \cdot (k-1)$. 
	
	Now consider a run in which $p_j = p_{j'} = p$. Assume that in run $\cR_j$\footnote{Run $\cR_j$ is defined in Definition \ref{def.R_j}.}, the best effective bid to $1, \ldots, k-1$ comes from $j'$, and in run $\cR_j$, the best effective bid to $1, \ldots, k-1$ comes from $j$. In run $\cR_j$, the budget of $j'$ is exhausted when $k$ arrives and assume that $k$ does not get any bids, making $u_e = 0$ for $e = (k, j)$. Now in run $\cR$, $\eb(k, j) = \alpha (1-p)$ and $\eb(k, j') = (\alpha - 1) \cdot (k-1) \cdot (1-p)$. Thus, even though $\eb(k, j) > u_e$, $k$ will be matched to $j'$ and not $j$. Clearly, this phenomenon will hold for all runs in which $p_{j'}$ is not too much larger than $p_j$.
\end{example}

\bigskip

\setcounter{figure}{1} 

\begin{figure}

	\begin{wbox}
		\begin{alg}
		\label{alg.adwords}
		{\bf ($\cA_2$:  Algorithm for \textsc{Adwords})}\\

		\begin{enumerate}
			\item {\bf Initialization:} $M \leftarrow \emptyset$, \ $W \leftarrow 0$ \ and \ $W_f \leftarrow 0$ \\
			 $\forall j \in A$, {\bf do}:
				\begin{enumerate}
				\item Pick $w_j$ uniformly from $[0, 1]$ and 
			set price \ $p_j \leftarrow e^{w_j - 1}$.
			\item $r_j \leftarrow 0$.
			\item $L_j \leftarrow B_j$.
			\end{enumerate}
			
		\bigskip
			\item {\bf Query arrival:} When query $i$ arrives, {\bf do}:
			\begin{enumerate}
			\item  $\forall j \in A$ s.t. $(i, j) \in E$ and $L_j > 0$ {\bf do}: 
			\begin{enumerate}
				\item $\eb(i, j) \leftarrow \bid(i, j) \cdot (1 - p_j)$.
				\item Offer effective bid of $\eb(i, j)$ to $i$.
			\end{enumerate}

			\item Query $i$ accepts the bidder whose effective bid is the largest. \\
			      (If there are no bids, matching $M$ remains unchanged.) \\
				If $i$ accepts $j$'s bid, then {\bf do}:
			\begin{enumerate}
			\item Set utility: \ $u_i \leftarrow \bid(i, j) \cdot (1 - p_j)$.
			\item Update revenue: \ $r_j \leftarrow r_j + \bid(i, j) \cdot p_j$.
			\item Update matching: \ $M \leftarrow M \cup (i, j)$. 
			\item Update weight: \ $W \leftarrow W + \min \{L_j, \bid(i, j)\}$ \ and \\ $W_f \leftarrow W_f + \max \{0, \ \bid(i, j) - L_j\}$.
			\item Update $L_j$: \ $L_j \leftarrow L_j - \min \{L_j, \bid(i, j)\}$.
			\end{enumerate}
			\end{enumerate} 
			\item {\bf Output:}	Output matching $M$, real money spent $W$, and fake money spent $W_f$. 
		\end{enumerate} 
			\end{alg}
	\end{wbox}
\end{figure}

For the rest of this section, we will make this assumption:

\bigskip

\noindent

{\bf Assumption of No-Surpassing for \textsc{Adwords}:} 
The following holds:\\
Assume that Step 1 of Algorithm \ref{alg.adwords} has been executed and a price $p_k$ has been assigned to each advertiser $k$. Suppose that the effective bid which query $i$ gets in run $\cR_j$ is less than $\bid(i, j) \cdot (1 - p_j)$; the latter is clearly the effective bid which $j$ makes to query $i$ in run $\cR$. Then, in run $\cR$, no bid to $i$ will surpass $\eb(i, j) = \bid(i, j) \cdot (1 - p_j)$.

\bigskip

Section \ref{sec.Adwords} presents an algorithm for \textsc{Adwords}, using fake money; the above-stated assumption is used in its analysis, in particular in the proof of Claim \ref{claim.alpha-gen}.  Section \ref{sec.Alg-Small} shows that by upper bounding the fake money used in the worst case, we get an optimal algorithm for \textsc{Small}, again based on the above-stated assumption.

\subsection{Structural Difficulties in \textsc{Adwords}}
 \label{sec.diff-GENERAL}
 
 To describe the structural difficulties in \textsc{Adwords}, we provide three instances in Example \ref{ex.three}. In order to obtain a completely unconditional result, we would need to adopt the following convention: assume bidder $j$ has $L_j$ money leftover and impression $i$ just arrived. Assume that $j$'s bid for $i$ is $\bid(i, j)$. If $\bid(i, j) > L_j$, then $j$ should not be allowed to bid for $i$, since $j$ has insufficient money. 
 
 Under this convention, it is easy to see that even a randomized algorithm will accrue only $\$W$ expected revenue on at least one of the instances given in Example \ref{ex.three}, provided it is greedy, i.e., if a match is possible, it does not rescind this possibility; the latter condition is a simple way of ensuring that the algorithm is ``fine tuned'' for a particular type of example. Note that the optimal for each instance is $\$2W$.

 \bigskip
  
 \begin{example}
 	\label{ex.three}
 	Let $W \in \Zplus$ be a large number. We define three instances of \textsc{Adwords}, each having two bidders, $b_1$ and $b_2$, with budgets of $\$W$ each. Instances $I_1$ and $I_2$ have $W+1$ queries, where for the first $W$ queries, both bidders bid $\$1$ each. For the last query, under $I_1$, $b_1$ bids $\$W$ and $b_2$ is not interested. Under $I_2$, $b_2$ bids $\$W$ and $b_1$ is not interested. Instance $I_3$ has $2W$ queries and both bidders bid $\$1$ for each of them.  
 \end{example}
 
Therefore, to obtain a non-trivial competitive ratio, bidder $j$ must be allowed to bid for $i$ even if $L_j < \bid(i, j)$. This amounts to the use of free disposal, since $j$ will be allowed to obtain query $i$ for less money than its value for $i$. Next, let's consider a second convention: if $L_j < \bid(i, j)$, then $j$ will bid $L_j$ for $i$. As stated in Remark \ref{rem.smaller-bid}, this convention is not supported by our proof technique, since Claim \ref{claim.alpha-gen} fails to hold, breaking the proof of Lemma \ref{lem.B-star-gen} and hence Lemma \ref{lem.Adwords}. 

This led us to a third convention: if $L_j < \bid(i, j)$, then $j$ will bid $L_j$ real money and $\bid(i, j) - L_j$ ``fake'' money for $i$. As a result, the total revenue of the algorithm consists of real money as well as fake money; in Algorithm \ref{alg.adwords}, these are denoted by $W$ and $W_f$, respectively. The problem now is that Lemma \ref{lem.Adwords}, which compares the total revenue of the algorithm, namely $W + W_f$, with the optimal offline revenue, does not yield the competitive ratio of Algorithm \ref{alg.adwords}. Remark \ref{rem.smaller-bid} explains why our proof technique does not allow us to dispense with the use of fake money. 

 We note that when Algorithm \ref{alg.adwords} is run on instances of \textsc{OBM}, it reduces to \textsc{Ranking}. Therefore, it is indeed a (simple) extension of \textsc{Ranking} to \textsc{Adwords}.

\subsection{Algorithm for \textsc{Adwords}}
\label{sec.Adwords}

Algorithm \ref{alg.adwords}, which will be denoted by $\cA_2$, is an attempt at online algorithm for \textsc{Adwords}. As stated in Section \ref{sec.diff-GENERAL}, because of the use of fake money, we will not be able to give a competitive ratio for it, instead, in Lemma \ref{lem.Adwords}, we will compare the sum of real and fake money spent by the algorithm with the real money spent by an  optimal offline algorithm. 

In algorithm $\cA_2$, $L_j \in \Zplus$ will denote bidder $j$'s {\em leftover budget}; it is initialized to $B_j$. At the arrival of query $i$, bidder $j$ will bid for $i$ if $(i, j) \in E$ and $L_j > 0$. In general, $i$ will receive a number of bids. The exact procedure used by $i$ to accept one of these bids is given in algorithm $\cA_2$; its steps are self-explanatory. If $i$ accepts $j$'s bid then $i$ is matched to $j$, the edge $(i, j)$ is assigned a weight of $\bid(i, j)$ and $L_j$ is decremented by $\min \{L_j, \bid(i, j) \}$.

Note that we do not require that there is sufficient left-over money, i.e., $L_j \geq \bid(i, j)$,  for $j$ to bid for $i$. In case $L_j < \bid(i, j)$ and $i$ accepts $j$'s bid, then $\bid(i, j) - L_j$ of the money paid by $j$ for $i$ is {\em fake money}; this will be accounted for by incrementing $W_f$ by $\bid(i, j) - L_j$. The rest, namely $L_j$, is real money and is added to $W$. If $\bid(i, j) \geq  L_j$ and $i$ accepts $j$'s bid, then $L_j$ becomes zero and $j$ does not bid for any future queries. At the end of the algorithm, random variable $W$ denotes the total real money spent and $W_f$ denotes the total fake money spent. 

The {\em offline optimal solution} to this problem is defined to be a matching of queries to advertisers that maximizes the weight of the matching; this is done with full knowledge of graph $G$. As stated in Remark \ref{rem.complete-matching}, we will assume that under such a matching, $P$, the budget $B_j$ of each bidder $j$ is fully spent, i.e., $w(P) = \sum_{j = 1}^m {B_j}$.


\begin{figure}[h]
\begin{center}
\includegraphics[width=7.2in]{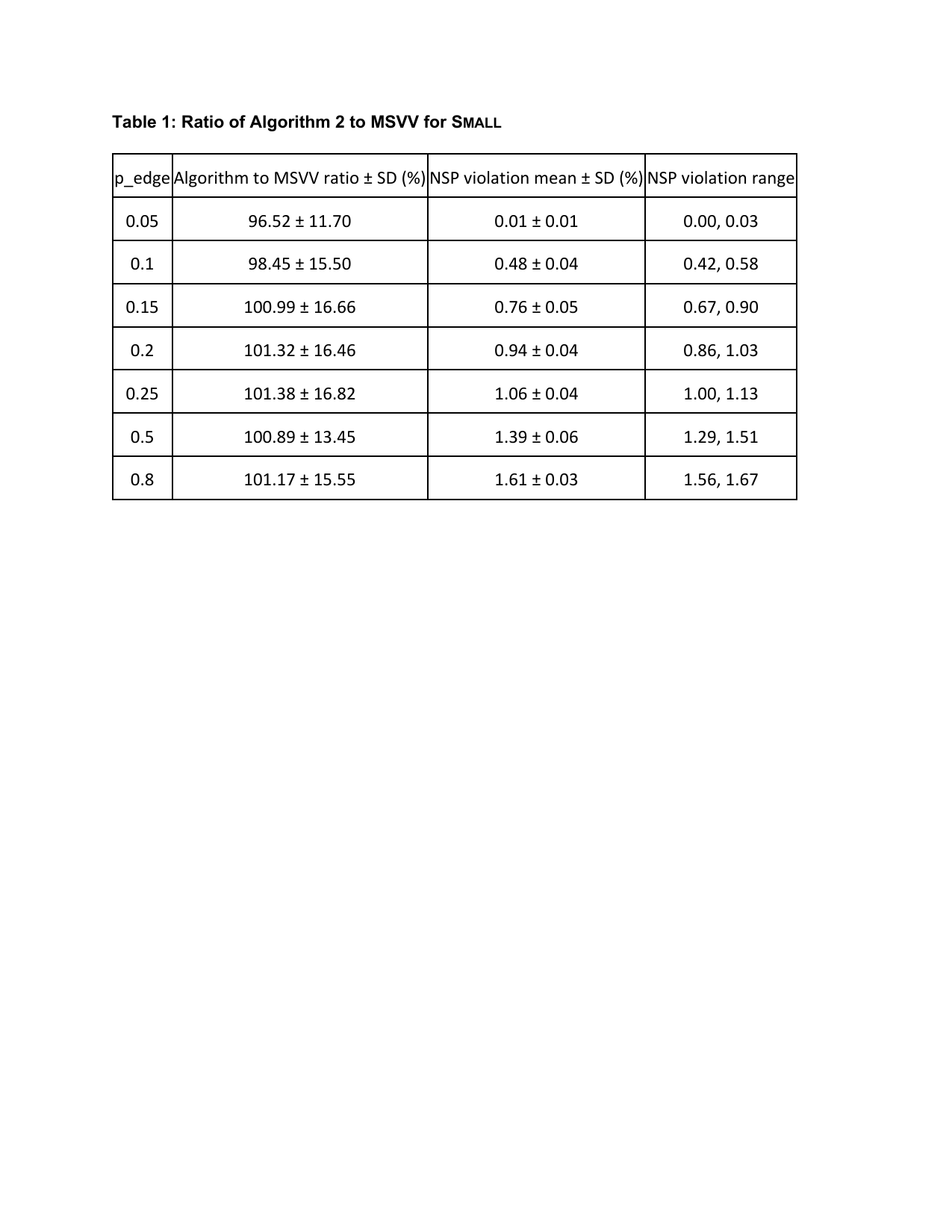}
\label{fig.1}
\end{center}
\end{figure}



\begin{figure}[h]
\begin{center}
\includegraphics[width=7.2in]{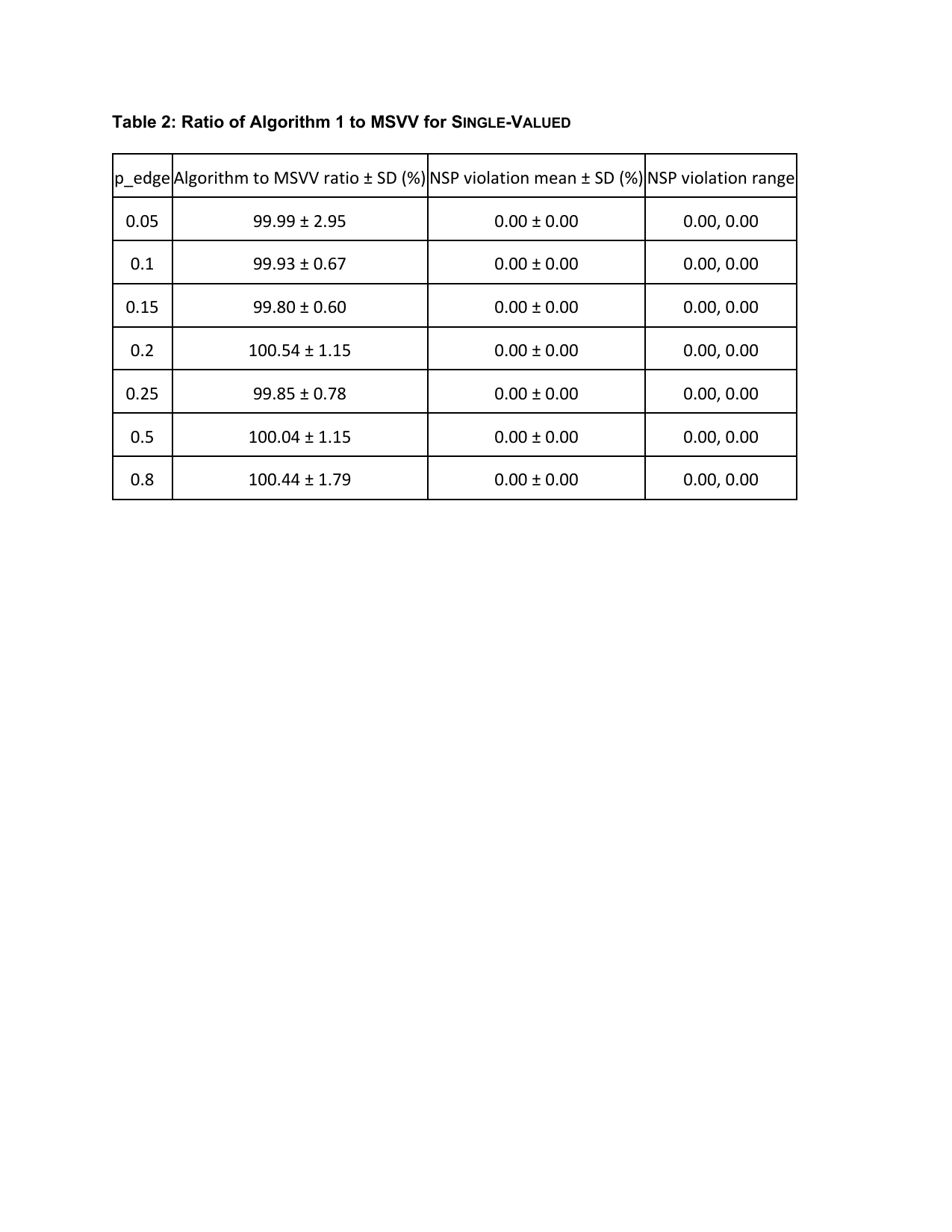}
\label{fig.2}
\end{center}
\end{figure}


\begin{figure}[ht]
\begin{minipage}[b]{0.5\linewidth}
\centering
\includegraphics[width=7in]{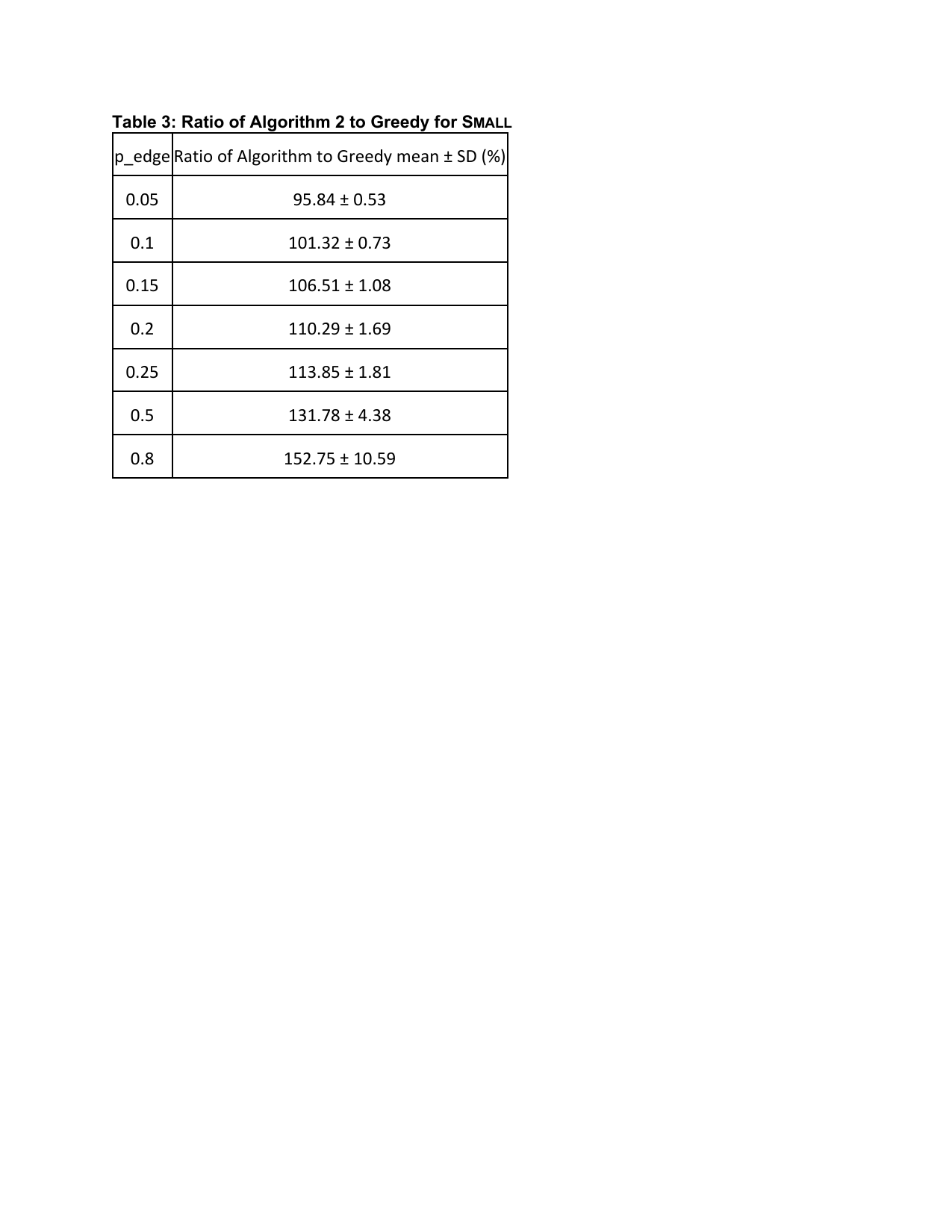}
\label{fig.3}
\end{minipage}
\hspace{0.5cm}
\begin{minipage}[b]{0.5\linewidth}
\centering
\includegraphics[width=7in]{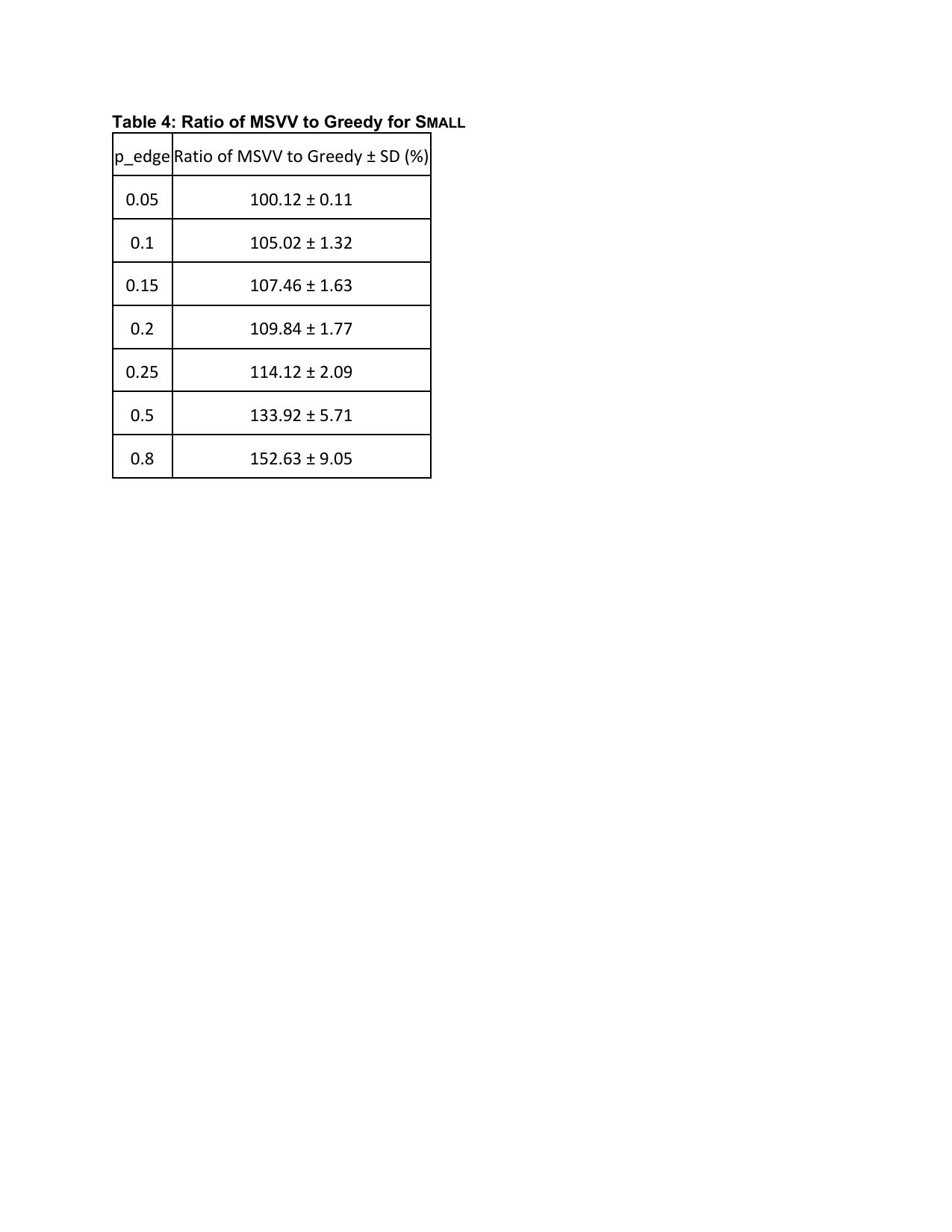}
\label{fig.4}
\end{minipage}
\end{figure}

\subsection{Analysis of Algorithm \ref{alg.adwords}}
\label{sec.analysis-gen}

\begin{lemma}
	\label{lem.add-gen}
	\[ \bE [W + W_f] \ =  \sum_i^n {\bE \left[ {u_i} \right]} \ + \   \sum_j^m {\bE[r_j]}  .\]
\end{lemma}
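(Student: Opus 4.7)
The plan is to mirror the accounting argument used in Lemma \ref{lem.add} and Lemma \ref{lem.add-SINGLE}, with a single adjustment that explains why the statement combines $W$ and $W_f$ rather than just $W$. First, I would fix any outcome of the randomization in Step (1) of $\cA_3$ and then look at what a single matched edge $(i,j) \in M$ contributes to each side of the equation.

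On the left, the contribution of $(i,j)$ to $W+W_f$ is exactly $\bid(i,j)$. This is the whole point of the fake-money convention described in Section \ref{sec.diff-GENERAL}: if $L_j \geq \bid(i,j)$ when $i$ accepts $j$'s bid, then $W$ is incremented by $\bid(i,j)$ and $W_f$ by $0$; otherwise $W$ is incremented by $L_j$ and $W_f$ by $\bid(i,j) - L_j$. Either way the sum is $\bid(i,j)$, and it is precisely this invariance that makes $W + W_f$ the correct quantity to account for. On the right, following the same template as Algorithm \ref{alg.main}, matching $(i,j)$ sets $u_i = \bid(i,j)(1 - p_j)$ and increments $r_j$ by $\bid(i,j)\, p_j$, so the total contribution of the edge is $\bid(i,j)(1 - p_j) + \bid(i,j)\, p_j = \bid(i,j)$.

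Thus, edge by edge, both sides receive the same contribution, while unmatched queries $i$ contribute $u_i = 0$ and bidders $j$ that are never matched contribute $r_j = 0$. Summing over $M$ yields the pointwise identity
\[ W + W_f \;=\; \sum_{i=1}^{n} u_i \;+\; \sum_{j=1}^{m} r_j \]
on every sample path, and the lemma follows by taking expectations and invoking linearity of expectation.

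I do not anticipate a genuine obstacle here: the lemma is a bookkeeping statement, and the only subtlety is the recognition that one must use $W + W_f$, not $W$ alone, on the left-hand side. This is exactly why Algorithm \ref{alg.adwords} is set up to track the fake money as a separate random variable, so that the clean per-edge identity still holds and this lemma remains as simple as its predecessors.
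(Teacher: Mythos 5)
Your proof is correct and follows essentially the same per-edge bookkeeping argument as the paper, just with more detail on the case analysis for why $(i,j)$ contributes exactly $\bid(i,j)$ to $W + W_f$.
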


\begin{proof}
	For each edge $(i, j) \in M$, its contribution to $W + W_f$ is $\bid(i, j)$. Furthermore, the sum of $u_i$ and the contribution of $(i, j)$ to $r_j$ is also $\bid(i, j)$. This gives the first equality below. The second equality follows from linearity of expectation.
		\[ \bE [W + W_f] \ = \bE \left[ \sum_{i = 1}^n {u_i} \ + \ \sum_{j = 1}^m {r_j} \right] 
		\ =  \sum_i^n {\bE \left[ {u_i} \right]} \ + \   \sum_j^m {\bE[r_j]}   ,\]
\end{proof}

Recall that for \textsc{Single-Valued}, we gave Lemma \ref{lem.S-subset-j} and Corollary \ref{cor.S-subset-j}, which established a relationship between the available bidders for a query $i$ in the two runs $\cR$ and $\cR_j$. These facts dealt with multisets and in Section \ref{sec.Single-Adwords}, we have defined operations on multisets. 

We will need Lemma \ref{lem.S-subset-j} and Corollary \ref{cor.S-subset-j} for analyzing Algorithm \ref{alg.adwords} as well, though the definitions of the multisets will be guided by the following: If bidder $k \in A$ has leftover money of $L_k$, as determined by Algorithm \ref{alg.adwords}, then we will say that $i$ has $L_k$ copies of $k$ {\em available} to it. Furthermore, if $k$'s bid for $i$ is $\bid(i, k)$ and this bid is successful, then $L_k$ will be decremented by $\min \{L_k, \bid(i, k)\}$, as stated in Step 2(b)(v) of the algorithm, and the available copies of $k$ for the next bidder will decrease accordingly. 

As before, let us renumber the queries so their order of arrival under $\cO(B)$ is $1, 2, \ldots n$. Let $T(i)$ and $T_j(i)$ denote the multisets of available copies of each bidder at the time of arrival of query $i$ (i.e., just before the query $i$ gets matched), in runs $\cR$ and $\cR_j$, respectively. Similarly, let $S(i)$ and $S_j(i)$ denote the multisets obtained by restricting $T(i)$ and $T_j(i)$ to the bidders that have edges to query $i$ in graphs $G$ and $G_j$,  respectively.

We have assumed that Step (1) of Algorithm \ref{alg.adwords} has already been executed and a price $p_k$ has been assigned to each good $k$. With probability 1, the prices are all distinct. Let $F_1$ be the multiset containing $B_l$ copies of $l$ for each $l \in A$ such that $p_l < p_j$. Similarly, let $F_2$ be the multiset containing $B_l$ copies of $l$ for each $l \in A$ such that and $p_l > p_j$. 

Under the definitions and operations stated above, it is easy to check that Lemma \ref{lem.S-subset-j} and Corollary \ref{cor.S-subset-j} hold for Algorithm \ref{alg.adwords} as well. Therefore, Lemma \ref{lem.u_e-Single} also carries over. Definition \ref{def.threshold-Single} needs to be modified to the following.

\bigskip

\begin{definition}
	\label{def.threshold-Adwords}
	Let $e = (i, j) \in E$ be an arbitrary edge in $G$. Define random variable, $u_e$, called the {\em truncated threshold} for edge $e$, to be $u_e = \min \{u_i, \ \bid(i, j) \cdot \left(1 - {\frac 1 e}\right) \}$, where $u_i$ is the utility of query $i$ in run $\cR_j$.
	\end{definition}

\bigskip

Definition \ref{def.j-star} needs to be changed to the following. 

\bigskip

\begin{definition}
	\label{def.j-star-gen}
	Let $j \in A$. Let $i_1, \ldots , i_k$ be queries such that for $1 \leq l \leq k$, $(i_l, j) \in E$ and $\sum_{l = 1}^{k} {\bid(i_l, j)} = B_i$. Then $(j; \ i_1, \ldots , i_k)$ is called a {\em $B_j$-star}. Let $X_j$ denote this $B_j$-star. The contribution of $X_j$ to $\bE[W]$ is $\bE[r_j] + \sum_{l = 1}^k {\bE[u_{i_l}]}$, and it will be denote by $\bE [X_j]$.
\end{definition}	
	
Corresponding to $B_j$-star $X_j = (j; \ i_1, \ldots , i_k)$, denote by $e_l$ the edge $(i_l, j) \in E$, for $1 \leq l \leq k$. Furthermore, let $u_{e_l}$ denote the truncated threshold random variable corresponding to $e_l$. The next lemma crucially uses the fact that $p_j$ is independent of $u_{e_l}$; the reason for this fact is the same as in \textsc{Single-Valued}.

\bigskip

\begin{lemma}
	\label{lem.B-star-gen}
	Let $j \in A$ and let $X_j = (j; \ i_1, \ldots , i_k)$ be a $B_j$-star. Then 
	\[ \bE [X_j] \ \geq \  B_j \cdot \left(1 - {\frac 1 e}\right) . \]
\end{lemma}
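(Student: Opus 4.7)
The plan is to imitate the proof of Lemma \ref{lem.star} almost verbatim, with two adjustments forced by the move to GENERAL: bids now depend on edges, so write $\beta_l = \bid(i_l, j)$ and recall that, by definition of a $B_j$-star, $\sum_{l=1}^{k} \beta_l = B_j$; and the ``revenue'' $r_j$ must be read as the total (real plus fake) money accumulated in $j$'s account, which is exactly how $r_j$ is incremented in Step 2(b)(ii) of Algorithm \ref{alg.adwords}. By Definition \ref{def.threshold-Adwords}, each truncated threshold $u_{e_l}$ lies in $[0, \beta_l \cdot (1 - 1/e)]$, and since $u_{e_l}$ is determined by run $\cR_j$ on the graph $G_j$ (which does not contain $j$), it is independent of $p_j$. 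This independence, together with Lemma \ref{lem.add-gen}, is what lets Lemma \ref{lem.add-gen} still be coupled through the law of total expectation.

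I would condition on $(u_{e_1} = \beta_1 z_1, \ldots, u_{e_k} = \beta_k z_k)$ with $z_l \in [0, 1 - 1/e]$, and for each $l$ let $w_l \in [0,1]$ satisfy $e^{w_l - 1} = 1 - z_l$. The centerpiece is an analogue of Claim \ref{claim.alpha}: conditioned on this event, if $p_j = e^{x-1}$ with $x < w_l$, then query $i_l$ is matched to $j$ in run $\cR$ and so $r_j$ is incremented by $\beta_l \cdot p_j$. To see this, note that in run $\cR_j$ the best effective bid reaching $i_l$ has value at most $u_{e_l} = \beta_l z_l$ (truncation makes this automatic), while the effective bid of $j$ in run $\cR$ equals $\beta_l (1 - p_j) > \beta_l z_l$. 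By the \emph{Assumption of No-Surpassing for GENERAL}, no other bid on $i_l$ in run $\cR$ exceeds $\beta_l (1 - p_j)$, so $i_l$ must take $j$'s bid; because Algorithm \ref{alg.adwords} allows $j$ to bid whenever $L_j > 0$ (using fake money to cover any shortfall), the only way this could fail is if $L_j = 0$ already at $i_l$'s arrival, in which case $j$ has already collected at least $B_j$ in real money and the final bound holds trivially.

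Introducing the indicators $I_l(x) = \mathbf{1}[x < w_l]$ and integrating $p_j = e^{x-1}$ over $x \in [0, 1]$ exactly as in Lemma \ref{lem.star}, I obtain
\[
\bE[r_j \mid u_{e_1} = \beta_1 z_1, \ldots, u_{e_k} = \beta_k z_k] \ \geq\ \sum_{l=1}^{k} \beta_l \left(1 - \tfrac{1}{e} - z_l\right).
\]
Applying the law of total expectation against the joint density of $(u_{e_1}, \ldots, u_{e_k})$ and using $\sum_l \beta_l = B_j$ yields
\[
\bE[r_j] \ \geq\ B_j \left(1 - \tfrac{1}{e}\right) \ -\ \sum_{l=1}^{k} \bE[u_{e_l}].
\]
Lemma \ref{lem.u_e-Single} carries over to GENERAL (as the paper notes, Lemma \ref{lem.S-subset-j} and Corollary \ref{cor.S-subset-j} hold in this setting under the multiset definitions), giving $\bE[u_{i_l}] \geq \bE[u_{e_l}]$. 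Summing, $\bE[X_j] = \bE[r_j] + \sum_l \bE[u_{i_l}] \geq B_j(1 - 1/e)$.

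The main obstacle is the analogue of Claim \ref{claim.alpha}, which is precisely the place where the no-surpassing property enters. In SINGLE-VALUED, $j$'s effective bid had a single value $b_j (1 - p_j)$, so the no-surpassing conclusion was automatic from Corollary \ref{cor.S-subset-j}; in GENERAL the edge-dependent bids make this genuinely assumption-laden, as Example \ref{ex.GEN-no-surpass} shows. The other delicate point is bookkeeping the fake money: the argument works precisely because $r_j$ is credited $\beta_l p_j$ regardless of whether any of it is fake, so the lower bound on $\bE[r_j]$ is a statement about $\bE[W + W_f]$ via Lemma \ref{lem.add-gen}, not about $\bE[W]$ alone. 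Converting it into an honest competitive ratio for SMALL is the job of Section \ref{sec.Alg-Small}, not of this lemma.
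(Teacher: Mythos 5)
Your proof takes essentially the same approach as the paper: condition on the truncated thresholds $(u_{e_1},\ldots,u_{e_k})$ via the law of total expectation (using the independence of $p_j$ from $u_{e_l}$), establish an analogue of Claim~\ref{claim.alpha} using the no-surpassing assumption together with the fake-money convention to guarantee $i_l$ gets matched to $j$ whenever $x < w_l$ and $L_j>0$, integrate against the density of $p_j$, and finally apply Lemma~\ref{lem.u_e-Single} to pass from $\bE[u_{e_l}]$ to $\bE[u_{i_l}]$. Two small remarks: your displayed intermediate bound $\sum_l \beta_l(1-1/e-z_l)$ is the correct form (the paper's display momentarily factors out $B_j$ instead of keeping $\bid(i_l,j)$ inside the sum, though its final line is right); and in the $L_j=0$ case it is cleaner to say $r_j$ has already accumulated at least $B_j \cdot p_j$ (the algorithm credits $\bid(i,j)\cdot p_j$ to $r_j$ whether or not the money is fake) rather than speaking of $B_j$ in real money, but the conclusion you draw is the same.
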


\begin{proof}
We will first lower bound $\bE [r_j]$. Let $f_U(\bid(i_1, j) \cdot z_1, \ \ldots \ , \bid(i_k, j) \cdot z_k)$ be the joint probability density function of $(u_{e_1}, \ldots u_{e_k})$; clearly, $f_U(\bid(i_1, j) \cdot z_1, \ \ldots \ ,  \bid(i_k, j) \cdot z_k)$ can be non-zero only if $z_l \in [0, 1 - {1 \over e}]$, for $1 \leq l \leq k$. 

By the law of total expectation, \ \ \ \  $\bE [r_j] =$
	\[ \int_{(z_1, \ldots, z_k)} {\bE [r_j ~|~ u_{e_1} = \bid(i_1, j) \cdot z_1, \ldots , u_{e_k} = \bid(i_k, j) \cdot z_k] \cdot f_U(\bid(i_1, j) \cdot z_1, \ldots \bid(i_k, j) \cdot z_k) \ dz_1 \ldots dz_k} ,\]
	where the integral is over $z_l \in [0, \left(1 - {\frac 1 e}\right)]$, for $1 \leq l \leq k$. 
	
	For lower-bounding the conditional expectation in this integral, let $w_l \in [0, 1]$ be s.t. $e^{w_l -1} = 1-z_l$, for $1 \leq l \leq k$. Let $x \in [0, 1]$. For $1 \leq l \leq k$, define indicator functions $I_l : [0, 1] \rightarrow \{0, 1 \}$ as follows.
	   \[
            I_l(x) \ \ = \ \begin{cases}
                1 & \text{if } x < w_l, \\
                0 & \text{otherwise.}
            \end{cases}
        \]

	 Furthermore, define
	\[ V(x) = \sum_{l =1}^k { I_l(x) \cdot \bid(i_l, j) } . \]

	\begin{claim}
		\label{claim.alpha-gen}
		Conditioned on $(u_{e_1} = \bid(i_1, j) \cdot z_1, \ldots , u_{e_k} = \bid(i_k, j) \cdot z_k)$, if $p_j = e^{x-1}$, where $x \in [0, 1]$, then the contribution to $r_j$ in this run of algorithm $\cA_2$ was $\geq p_j \cdot V(x)$. 
	\end{claim}
	
	\begin{proof}
			Suppose $I_l(x) = 1$, then $x < w_l$. In run $\cR_j$, the maximum effective bid that $i_l$ received has value $\bid(i_l, j) \cdot z_l$. In run $\cR$, if on the arrival of query $i_l$, $L_j = 0$, i.e., $j$ is already fully matched, then the contribution to $r_j$ in this run was $B_j \cdot p_j$ and the claim is obviously true. If $L_j > 0$, then since $x < w_l$, $1 - p_j > z_l$. Therefore, by Corollary \ref{cor.S-subset-j}, query $i_l$ will receive its largest effective bid from $j$. Hence, $i_l$ will get matched to $j$ and $r_j$ will be incremented by $\bid(i_l, j) \cdot p_j$. The claim follows.
	\end{proof}

By Claim \ref{claim.alpha-gen}, 
\[ \bE [r_j ~|~ u_{e_1} = \bid(i_1, j) \cdot z_1, \ldots , u_{e_k} = \bid(i_k, j) \cdot z_k] \ \geq \ \int_0^{1} {V(x) \cdot e^{x-1} \ dx} \]

\[ = \sum_{l = 1}^k {\bid(i_l, j) \cdot {\int_0^{1} {I_l(x) \cdot e^{x-1} \ dx} } } 
\  = \  \sum_{l = 1}^k {\bid(i_l, j) \cdot {\int_0^{w_l} {e^{x-1} \ dx} } } \]

\[ = \ B_j \cdot \sum_{l = 1}^k {\left( e^{w_l -1} - {\frac 1 e} \right)} \ = \ B_j \cdot \sum_{l = 1}^k \left( 1 - {\frac 1 e} - z_l \right) . \]
	
Therefore, 	$ \bE [r_j] \ = $
	\[ \int_{(z_1, \ldots, z_k)} {\bE [r_j ~|~ u_{e_1} = \bid(i_1, j) \cdot z_1, \ldots , u_{e_k} = \bid(i_k, j) \cdot z_k] \cdot f_U(\bid(i_1, j) \cdot z_1, \ldots \bid(i_k, j) \cdot z_k) \ dz_1 \ldots dz_k} ,\]

\[ \geq B_j \cdot  \int_{(z_1, \ldots, z_k)}  {\sum_{l = 1}^k \left( 1 - {\frac 1 e} - z_l \right) \cdot f_U(\bid(i_1, j) \cdot z_1, \ldots \bid(i_k, j) \cdot z_k) \ dz_1 \ldots dz_k} \]
	
\[	= \  B_j \cdot \left( 1 - {\frac 1 e} \right) -  \sum_{l = 1}^k {\bE[u_{e_l}]}  .\]

	By Lemma \ref{lem.u_e-Single}, $\bE [u_{i_l}] \geq \bE [u_{e_l}]$, for $1 \leq l \leq k$. Hence we get

\[ \bE [X_j]  \ = \ \bE[r_j] + \sum_{l = 1}^k {\bE[u_{i_l}]} \ 
\geq   \ =  B_j \cdot \left( 1 - {\frac 1 e} \right) , \]
\end{proof}

\begin{lemma}
		\label{lem.Adwords}
	 Algorithm $\cA_2$ satisfies
	 \[ \bE \left[ W + W_f \right] \ \geq \ \left( 1 - {\frac 1 e} \right) \cdot w(P) .\]
	 Furthermore, it is budget-oblivious.
\end{lemma}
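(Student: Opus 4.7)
The plan is to mirror the wrap-up arguments used for RANKING in Theorem \ref{thm.ranking} and for SINGLE-VALUED in Theorem \ref{thm.Single}, namely decompose $\bE[W + W_f]$ via Lemma \ref{lem.add-gen}, regroup the terms along a $B_j$-star structure extracted from an offline optimum, and then invoke Lemma \ref{lem.B-star-gen} on each piece.

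First, let $P$ be an offline optimum matching. By the convention in Remark \ref{rem.complete-matching}, for every bidder $j$ the queries assigned to $j$ under $P$ have bids summing to exactly $B_j$. Denoting these queries by $Q_j = (i_{j,1}, \ldots, i_{j,k_j})$, the pair $(j; i_{j,1}, \ldots, i_{j,k_j})$ is a $B_j$-star in the sense of Definition \ref{def.j-star-gen}, and $w(P) = \sum_{j=1}^m B_j$. Applying Lemma \ref{lem.B-star-gen} to each such $B_j$-star gives $\bE[r_j] + \sum_{l=1}^{k_j} \bE[u_{i_{j,l}}] \geq B_j(1 - 1/e)$. Summing over $j$ and using the fact that $P$ is a matching (so each query $i$ appears in at most one $Q_j$), we obtain
\[
\sum_{j=1}^m \bE[r_j] + \sum_{j=1}^m \sum_{l=1}^{k_j} \bE[u_{i_{j,l}}] \;\geq\; \left(1 - \tfrac{1}{e}\right) \cdot w(P).
\]
Since the double sum of utilities on the left only runs over queries matched by $P$, it is bounded above by $\sum_{i=1}^n \bE[u_i]$, and hence
\[
\sum_{i=1}^n \bE[u_i] + \sum_{j=1}^m \bE[r_j] \;\geq\; \left(1 - \tfrac{1}{e}\right) \cdot w(P).
\]
Combining this with the decomposition $\bE[W + W_f] = \sum_i \bE[u_i] + \sum_j \bE[r_j]$ from Lemma \ref{lem.add-gen} yields the desired inequality.

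For the budget-obliviousness claim, I will simply inspect Algorithm \ref{alg.adwords}: the effective bid $\eb(i, j) = \bid(i, j) \cdot (1 - p_j)$ never involves $B_j$, and the only way $B_j$ enters the execution is in the guard ``$L_j > 0$'' at Step 2(a), which tests whether any budget remains. Thus the internal counter $L_j$ can be replaced by a single Boolean flag ``budget-exhausted'', and the algorithm's matching decisions are determined without ever knowing the actual budget values.

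The only step requiring care is the passage from per-bidder $B_j$-stars to the global bound: we must verify that the $Q_j$'s partition a subset of the queries so that the utility terms from different stars do not double-count. Since this follows immediately from $P$ being a matching, and the heavy lifting (the integration against the joint density of the truncated thresholds, the use of the conditional no-surpassing guarantee, and the independence of $p_j$ from $u_{e_l}$) has already been absorbed into Lemma \ref{lem.B-star-gen}, the proof of Lemma \ref{lem.Adwords} itself is essentially a bookkeeping argument rather than a new technical hurdle.
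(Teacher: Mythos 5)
Your proof follows essentially the same approach as the paper: decompose $\bE[W+W_f]$ via Lemma \ref{lem.add-gen}, regroup the resulting utility and revenue terms along the $B_j$-stars induced by the offline optimum $P$, invoke Lemma \ref{lem.B-star-gen} per star, and argue budget-obliviousness from the fact that the algorithm's only use of $B_j$ is the Boolean test $L_j>0$. The one small difference is that you correctly write the regrouping step as an inequality (discarding utilities of queries left unmatched by $P$), whereas the paper records it as an equality; your version is the more careful one.
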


\begin{proof}
	Let $P$ denote a maximum weight $b$-matching in $G$. By the assumption made in Remark \ref{rem.complete-matching}, its weight is
	\[ w(P) = \sum_{j = 1}^m {B_j} . \]
	 Let $T_j$ denote the $j$-star, under $P$, corresponding to each $j \in A$. The expected weight of matching produced by $\cA_2$ is 
	\[ \bE \left[ W + W_f \right] \ = \ \sum_{i= 1}^n {\bE \left[ {u_i} \right]} \ + \   \sum_{j = 1}^m {\bE[r_j]}   	\ = \ \sum_{j=1}^m {\bE [T_j]} \ \geq \ \sum_{j = 1}^m {B_j} \cdot \left( 1 - {\frac 1 e} \right) \ = \ \left( 1 - {\frac 1 e} \right) \cdot w(P) ,\]
	where the first equality uses Lemma \ref{lem.add}, the second follows from linearity of expectation and the inequality follows by using Lemma \ref{lem.B-star-gen}. 
	
		Finally, Algorithm $ \cA_3$ is budget-oblivious because it does not need to know the budgets $B_j$ for bidders $j$; it only needs to know during a run whether $B_j$ has been exhausted. The lemma follows. 
\end{proof}

\bigskip

\begin{remark}
	\label{rem.smaller-bid}
	Let us consider the following two avenues for dispensing with the use of fake money altogether; we will show places where our proof technique breaks down for each one. Assume $L_j < \bid(i, j)$.
	\begin{enumerate}
		\item Why not modify Step 2 of Algorithm \ref{alg.adwords} so that $j$'s bid for $i$ is taken to be $L_j$ instead of $\bid(i, j)$?
		\item Why not modify Step 2(b)(i) so it sets $u_i$ to $L_j \cdot (1 - p_j)$ rather than $B_j \cdot (1 - p_j)$
			\end{enumerate}
				Under the first avenue, we cannot ensure $u_i \geq u_e$, since it may happen that    $u_e > L_j \cdot (1 - p_j) = u_i$. The condition $u_i \geq u_e$ is used for deriving $\bE [u_i] \geq \bE [u_e]$, which is essential in the proof of Lemma \ref{lem.B-star-gen}.  

		To make the second avenue work, the proof of Claim \ref{claim.alpha-gen} would need to be changed as follows: the last case, $L_j > 0$, will need to be split into the two cases given above. However, under Case 2, which applies if $L_j < \bid(i, j)$, even though $p_j < p$, the largest effective bid that query $i_l$ receives may not be the one from $j$, since the effective bid of $j$ has value $L_j \cdot (1 - p_j) < \bid(i_l, j) \cdot (1 - p_j)$. Therefore, $i_l$ may not get matched to $j$, thereby invalidating Claim \ref{claim.alpha-gen}. 
\end{remark}		

\subsection{Algorithm for \textsc{Small}}
\label{sec.Alg-Small}

We will use Lemma \ref{lem.Adwords} to show that Algorithm \ref{alg.adwords} yields algorithms for \textsc{Small} by upper bounding the fake money used in the worst case. Their budget-obliviousness follows from that of Algorithm \ref{alg.adwords}.

\bigskip

\begin{ctheorem}
	\label{thm.Adwords-small}
		 Algorithm $\cA_2$ is an optimal online algorithm for \textsc{Small}; furthermore, it is budget-oblivious.
\end{ctheorem}

\begin{proof}
Let $I$ be an instance of \textsc{Small}. 
\[W_f \leq \  {\sum_{j \in A} {\max_{(i, j) \in E} \  \{\bid(i, j) - 1\} } } \]
Therefore, 
\[ \mu(I) = \max_{j \in A} \ \ \left\{ {{\max_{(i, j) \in E} \ \{\bid(i, j) -1 \}} \over {B_j}} \right\} \geq 
{ { \sum_{j \in A}  {\max_{(i, j) \in E} \ \{\bid(i, j) -1 \}} } \over {\sum_{j \in A} {B_j}} } 
\geq {{W_f} \over {w(P)}} , \]

where $\mu(I)$ is defined in Section \ref{sec.prelim}. 
Now, by definition of \textsc{Small},
\[ \lim_{n(I) \rightarrow \infty} \ {\mu(I)} = 0 , \]
where $n(I)$ denotes the number of queries in instance $I$. 

Therefore 
\[ \lim_{n(I) \rightarrow \infty} \ {{W_f} \over {w(P)}} = 0 . \]
The theorem follows from Lemma \ref{lem.Adwords}.
\end{proof}

\subsection{Experimental Results}
\label{sec.expt}

The purpose of the experimental results is two-fold: first, to determine how often is the No-Surpassing Property violated on a typical instance and second, to evaluate the performance of Algorithm \ref{alg.adwords}. The results are summarized in the four Tables given. The instances were generated using the following parameters: 
\begin{itemize}
	\item The number of advertisers and queries was 20 and 2,000, respectively.
	\item Budgets were picked in the range $[100, 2000]$.
	\item Bids were picked in the range $[1, 20]$. In addition, for each advertiser, we ensured that the bids were at most $0.02$ times the budget, to ensure that bids were small compared to budgets.   
\end{itemize}

The instances of \textsc{Small} were generated as follows. We picked the edge densities of the underlying graph to be $0.05, \ \ 0.1, \ \ 0.15, \ \ 0.2, \ \ 0.25, \ \ 0.5$ and \ $0.8$. For each edge density, 20 underlying graphs were constructed at random. Then, budgets were assigned randomly to each advertiser, in the range specified. Finally, for each advertiser, bids were assigned to incident edges randomly, in the range specified. The order of arrival of queries was picked at random. This resulted in one instance of \textsc{Small}. 

For generating instances of \textsc{Single-Valued}, the underlying graphs were constructed as above. An instance of \textsc{Single-Valued} was obtained from such a graph as follows. For each advertiser $j$, the bid value $b_j$ was picked randomly from the integral interval $[1, 20]$. Then a random number $n_j$ was picked randomly from the interval $[100, 2000]$ and $k_j$ was set to $\lfloor{n_j \over b_j}\rfloor$. 

For each instance of \textsc{Small}, Algorithm \ref{alg.adwords} was run 40 times, each time randomly setting $p_j$ for each advertiser $j$. The average revenue of these runs was computed and was divided by the revenue of the MSVV Algorithm run on the same instance. Finally, for each edge density, the average of these ratios over all 20 instances was computed and noted in Table 1. A similar procedure was followed for instances of \textsc{Single-Valued} using Algorithm \ref{alg.main} and the MSVV Algorithm. These results are reported in Table 2. 

Tables 3 compares, for all edge densities and instances of \textsc{Small} constructed above,  Algorithm \ref{alg.adwords} to the greedy algorithm, under which each query is matched to the highest bid. Finally, Tables 4 compares the MSVV Algorithm to the greedy algorithm. Observe that the performances of Algorithm \ref{alg.adwords} and the MSVV Algorithm are far superior to that of the greedy algorithm, thereby indicating that these instances are difficult enough to not succumb to a trivial algorithm and lending more credence to the results reported in Tables 1 and 2.

Violations of the No-Surpassing Property were computed as follows. For each instance of \textsc{Small} and for each setting of random prices $p_j$ for advertisers $j$, we determined if a violation occurred for each edge $(i, j)$ of the underlying graph, where $i$ is a query and $j$ is an advertiser, as follows: check if the effective bid made to query $i$ in run $\cR_j$ is less than $\bid(i, j) \cdot (1 - p_j)$; the latter being the effective bid which $j$ makes to query $i$ in run $\cR$. If so, then if in run $\cR$, the effective bid to $i$ surpasses $\eb(i_l, j) = \bid(i, j) \cdot (1 - p_j)$, then we say that a violation of the No-Surpassing property has occurred for edge $(i, j)$. In Table 1, we have reported the percentage of edges for which violations happen. 

As shown in Lemma \ref{lem.SV-no-surpassing}, the No-Surpassing Property holds for \textsc{Single-Valued}. As a stress test of our code, we repeated the above experiment on instances of \textsc{Single-Valued} as well. The results are reported in Table 2; all entries are zero, as expected.

\section{Discussion}
\label{sec.discussion}

The open question mentioned in the Introduction, of removing the assumption of no-surpassing property from our proof of Algorithm \ref{alg.adwords} for \textsc{Small}, deserves special attention because of its potential impact in the ad auctions marketplace. Another question is to place an upper bound on the expected fake money used, $\bE \left[W_f \right]$, in Algorithm \ref{alg.adwords} and strengthen Lemma \ref{lem.Adwords} to obtain a good bound on the competitive ratio of this algorithm for \textsc{Adwords}. This seems a promising avenue for improving the bound for \textsc{Adwords} from $0.5016$, given in \cite{Huang2020adwords}.

\section{Acknowledgements}
\label{sec.ack}

I wish to thank Asaf Ferber, Alon Orlitsky, Will Ma, Thorben Tr{\"o}bst and Rajan Udwani for valuable comments and discussions. A big thanks to Jacob Tyler Bigham and Joseph Jin-Him Wong, undergrads at UCI, for conducting the extensive experiments reported in this paper.

	\bibliographystyle{alpha}
	\bibliography{refs}
	


\end{document}